\newtheorem{lemma}{Lemma}
\newtheorem{claim}{Claim}
\newtheorem{corollary}{Corollary}
\newtheorem{theorem}{Theorem}
\newcounter{example}
\newenvironment{example}{\refstepcounter{example}\par\bigskip
\noindent\textit{Example~\theexample.} \rmfamily}{\hfill$\dashv$\bigskip}
\newcommand{\supp}{{\mathrm{Supp}}}
\newcommand{\tuples}{{\mathrm{Tup}}}
\newcommand{\domain}{{\mathrm{Dom}}}
\newcommand{\ltgc}{local-to-global consistency property}
\newcommand{\gcpr}{global consistency problem for relations}
\newcommand{\gcpb}{global consistency problem for bags}
\newcommand{\glcpb}{{\sc GCPB}}
\newcommand{\intcone}{\mathrm{intcone}}
\newcommand{\norm}[1]{\Vert #1 \Vert}
\newcommand{\bnorm}[1]{\norm{#1}_{\mathrm{b}}}
\newcommand{\unorm}[1]{\norm{#1}_{\mathrm{u}}}
\newcommand{\suppnorm}[1]{\norm{#1}_{\mathrm{supp}}}
\newcommand{\munorm}[1]{\norm{#1}_{\mathrm{mu}}}
\newcommand{\mbnorm}[1]{\norm{#1}_{\mathrm{mb}}}
\newcommand{\commentout}[1]{}
\title{\bf Structure and Complexity  of Bag Consistency}
\author{Albert Atserias\thanks{Universitat Polit\`ecnica de Catalunya,
    Barcelona, Catalonia, Spain. Atserias' research partially
    supported by MICIN project PID2019-109137GB-C22 (PROOFS).}  \and
  Phokion G. Kolaitis\thanks{UC Santa Cruz and IBM Research, Santa
    Cruz, CA, USA. Kolaitis' research partially supported by NSF Grant
    IIS-1814152.}}
\begin{document}

\maketitle

\begin{abstract}
  Since the early days of relational databases, it was realized that
  acyclic hypergraphs give rise to database schemas with desirable
  structural and algorithmic properties. In a by-now classical paper,
  Beeri, Fagin, Maier, and Yannakakis established several different
  equivalent characterizations of acyclicity; in particular, they
  showed that the sets of attributes of a schema form an acyclic
  hypergraph if and only if the local-to-global consistency property
  for relations over that schema holds, which means that every
  collection of pairwise consistent relations over the schema is
  globally consistent. Even though real-life databases consist of bags
  (multisets), there has not been a study of the interplay between
  local consistency and global consistency for bags.  We embark on
  such a study here and we first show that the sets of attributes of a
  schema form an acyclic hypergraph if and only if the local-to-global
  consistency property for bags over that schema holds. After this, we
  explore algorithmic aspects of global consistency for bags by
  analyzing the computational complexity of the global consistency
  problem for bags: given a collection of bags, are these bags
  globally consistent? We show that this problem is in NP, even when
  the schema is part of the input. We then establish the following
  dichotomy theorem for fixed schemas: if the schema is acyclic, then
  the global consistency problem for bags is solvable in polynomial
  time, while if the schema is cyclic, then the global consistency
  problem for bags is NP-complete. The latter result contrasts sharply
  with the state of affairs for relations, where, for each fixed
  schema, the global consistency problem for relations is solvable in
  polynomial time.
\end{abstract}

\section{Introduction} \label{sec:intro} Early investigations in
database theory led to the discovery that many fundamental algorithmic
problems about relational databases are intractable. In particular,
the relational join evaluation problem is NP-complete: given relations
$R_1,\ldots,R_m$ and a tuple $t$, does $t$ belong to the join
$R_1 \Join \ldots \Join R_m$ of the given relations? This motivated
the pursuit of tractable cases of the relational join evaluation
problem. In an influential paper \cite{DBLP:conf/vldb/Yannakakis81},
Yannakakis showed that the relational join evaluation problem is
solvable in polynomial time if the schema of the given relations is
acyclic, i.e., if the sets of the attributes of the given relations
are the hyperedges of an acyclic hypergraph. The notion of hypergraph
acyclicity turned out to have several other desirable properties in
relational databases that were explored in depth by Beeri, Fagin,
Maier, and Yannakakis \cite{BeeriFaginMaierYannakakis1983}. Arguably
the most prominent such property has to do with the universal relation
problem, also known as the global consistency problem
\cite{DBLP:journals/tods/AhoBU79,DBLP:conf/pods/Ullman82}. This
problem asks: given relations $R_1,\ldots,R_m$, is there a relation
$R$ such that, for every $i\leq m$, the projection of $R$ on the
attributes of $R_i$ is equal to $R_i$?  If the answer is positive,
then the relations $R_1,\ldots,R_m$ are said to be globally consistent
relations and $R$ is said to be a universal relation for them.
Honeyman, Ladner, and Yannakakis \cite{DBLP:journals/ipl/HoneymanLY80}
showed that the universal relation problem is NP-complete, even when
all input relations are binary. It is easy to see that if the
relations $R_1,\ldots,R_m$ are globally consistent, then they are
pairwise consistent, i.e., every two of them are globally consistent;
the converse, however, does not hold, in general.  Beeri et al.\
\cite{BeeriFaginMaierYannakakis1983} showed that a schema is acyclic
if and only if the local-to-global consistency property for relations
over that schema holds, which means that every collection of pairwise
consistent relations over the schema is globally consistent. Thus, for
acyclic schemas, pairwise consistency is both a necessary and
sufficient condition for global consistency; therefore, the universal
relation problem is solvable in polynomial time.

In all aforementioned results, relations are assumed to be sets. In
1993, Chaudhuri and Vardi \cite{DBLP:conf/pods/ChaudhuriV93} pointed
out that there is a gap between database theory and database practice
because ``real" databases use bags (multisets). They then called for a
re-examination of the foundations of databases where the fundamental
concepts and algorithmic problems are investigated under bag
semantics, instead of set semantics. In particular, Chaudhuri and
Vardi \cite{DBLP:conf/pods/ChaudhuriV93} raised the question of the
decidability of the conjunctive query containment problem under bags
semantics (the same problem under set semantics is known to be
NP-complete \cite{DBLP:conf/stoc/ChandraM77}). Various efforts in the
past and some recent progress notwithstanding
\cite{DBLP:conf/pods/KonstantinidisM19,DBLP:conf/pods/KhamisK0S20},
this question remains unanswered at present.

It is perhaps surprising that a study of consistency notions under bag
semantics has not been carried out to date. Our main goal in this
paper is to embark on such a study and to explore both structural and
algorithmic aspects of pairwise consistency and of global consistency
under bag semantics. In this study, the notions of consistency of bags
are, of course, defined using bag semantics in the computation of
projections.

In general, properties of relations do not automatically carry over to
similar properties of bags. This phenomenon manifests itself in the
context of consistency properties. Indeed, it is well known that if a
collection of relations is globally consistent, then their relational
join is a witness to their global consistency (see, e.g.,
\cite{DBLP:journals/ipl/HoneymanLY80}); in other words, their
relational join is a universal relation for them and, in fact, it is
the biggest universal relation. In contrast, as we point out in
Section \ref{sec:two-bag-cons}, this property fails for bags, i.e.,
there is a collection of bags that is globally consistent but the
bag-join of the bags in the collection is not a witness to their
global consistency.  In fact, this holds even for two consistent bags
and, furthermore, there may be no biggest witness to the consistency
of these bags.  Our first result establishes that two bags are
consistent if and only if they have the same projection on their
common attributes. While the analogous fact for relations is rather
trivial, here we need to bring in tools from the theory of linear
programming and maximum flow problems. As a corollary, we obtain a
polynomial-time algorithm for checking whether two given bags are
consistent and returning a witness to their consistency, if they are
consistent. After this, we establish our main result concerning the
structure of bag consistency. Specifically, we show that the sets of
attributes of a schema form an acyclic hypergraph if and only if the
local-to-global consistency for bags over that schema holds. This
shows that the main finding by Beeri et
al.\cite{BeeriFaginMaierYannakakis1983} about acyclicity and
consistency extends to bags. As we explain in Section
\ref{sec:three-bag-cons}, however, the architecture of the proof is
different from that in \cite{BeeriFaginMaierYannakakis1983}. In
particular, if a schema is cyclic, we give an explicit construction of
a collection of bags that are pairwise consistent, but not globally
consistent; the inspiration for our construction comes from an earlier
construction of hard-to-prove tautologies in propositional logic by
Tseitin \cite{Tseitin1968}.

We then explore algorithmic aspects of global consistency for bags by
analyzing the computational complexity of the global consistency
problem for bags: given a collection of bags, are these bags globally
consistent? Using a sparse-model property of integer programming that
is reminiscent of Carath\'eodory's Theorem for conic hulls
\cite{EisenbrandShmonin2006}, we first show that this problem is in
NP, even when the schema is part of the input. After this, we
establish the following dichotomy theorem for fixed schemas: if the
schema is acyclic, then the global consistency problem for bags is
solvable in polynomial time, while if the schema is cyclic, then the
global consistency problem for bags is NP-complete. The latter result
contrasts sharply with the state of affairs for relations, where, for
each fixed schema, the global consistency problem for relations is
solvable in polynomial time. Our NP-hardness results build on an
earlier NP-hardness result about three-dimensional statistical data
tables by Irving and Jerrum \cite{DBLP:journals/siamcomp/IrvingJ94},
which was later on refined by De Loera and Onn
\cite{DBLP:journals/siamcomp/LoeraO04}. Translated into our context,
this result asserts the NP-hardness of the global consistency problem
for bags over the triangle hypergraph, i.e., the hypergraph with
hyperedges of the form~$\{A_1,A_2\}, \{A_2,A_3\},
\{A_3,A_1\}$. Finally, we give a polynomial-time algorithm for the
following problem: given an acyclic schema and a collection of
pairwise consistent bags over that schema, construct a (small) witness
to their global consistency.  For this, we use Carath\'eodory's
classical theorem for conic hulls and the existence of strongly
polynomial algorithms for maximum flow problems (for the latter, see,
e.g.,~\cite{DBLP:conf/stoc/Orlin13}).

\paragraph{Related Work}
The interplay between local consistency and global consistency arises
naturally in several different settings. Already in 1962, Vorob'ev
\cite{vorob1962consistent} studied this interplay in the setting of
probability distributions and characterized the \ltgc~for probability
distributions in terms of a structural property of hypergraphs that
turned out to be equivalent to hypergraph acyclicity. It appears that
Beeri et al.\cite{BeeriFaginMaierYannakakis1983} were unaware of
Vorob'ev work, but later on Vorob'ev's work was cited in a survey of
database theory by Yannakakis
\cite{DBLP:journals/sigact/Yannakakis96}. In recent years, the
interplay between local consistency and global consistency has been
explored at great depth in the setting of quantum mechanics by
Abramsky and his collaborators (see, e.g.,
\cite{DBLP:journals/corr/abs-1102-0264,DBLP:conf/csl/AbramskyBKLM15,DBLP:journals/corr/abs-1111-3620}).
In that setting, the interest is in contextuality phenomena, which are
situations where collections of measurements are locally consistent
but globally inconsistent - the celebrated Bell's Theorem
\cite{bell1964einstein} is an instance of this. The similarities
between these different settings (probability distributions,
relational databases, and quantum mechanics) were pointed out
explicitly by Abramsky
\cite{DBLP:conf/birthday/Abramsky13,DBLP:journals/eatcs/Abramsky14}. This
also raised the question of developing a unifying framework in which,
among other things, the results by Vorob'ev and the results by Beeri
et al.\ are special cases of a single result. Using a relaxed notion
of consistency, we recently established such a result
for~$K$-relations, where~$K$ is a positive semiring
\cite{DBLP:journals/corr/abs-2009-09488}\footnote{This paper will
  appear in a forthcoming volume in honor of Samson Abramsky's
  contributions to logic.}. By definition, a~$K$-relation is a
relation such that each of its tuples has an associated element from
the semiring~$K$ as value. In particular, if~$\mathbb{Z}^{\geq 0}$ is
the semiring of non-negative integers (also known as the bag
semiring), then the~$\mathbb{Z}^{\geq 0}$-relations are precisely the
bags. For~$\mathbb{Z}^{\geq 0}$-relations, however, the relaxed notion
of consistency that we studied in
\cite{DBLP:journals/corr/abs-2009-09488} is essentially equivalent to
the consistency of probability distributions with rational
values. This left open the question of exploring the interplay between
(the standard notions of) local consistency and global consistency for
bags, which is what we set to do in the present paper. Furthermore, as
described earlier, here we also explore algorithmic aspects of global
consistency, which were not addressed at all in
\cite{DBLP:journals/corr/abs-2009-09488}.

\section{Preliminaries} \label{sec:prelims}

An \emph{attribute}~$A$ is a symbol with an associated
set~$\domain(A)$ called its \emph{domain}. If~$X$ is a finite set of
attributes, then we write~$\tuples(X)$ for the set
of~\emph{$X$-tuples}; this means that~$\tuples(X)$ is the set of
functions that take each attribute~$A \in X$ to an element of its
domain~$\domain(A)$. Note that~$\tuples(\emptyset)$ is non-empty as it
contains the \emph{empty tuple}, i.e., the unique function with empty
domain. If~$Y \subseteq X$ is a subset of attributes and~$t$ is
an~$X$-tuple, then the \emph{projection of~$t$ on~$Y$}, denoted
by~$t[Y]$, is the unique~$Y$-tuple that agrees with~$t$ on~$Y$. In
particular,~$t[\emptyset]$ is the empty tuple.

Let~$X$ be a set of attributes. We will view relations and bags
over~$X$ as functions from the set~$\tuples(X)$ to, respectively, the
Boolean semiring and the semiring of non-negative integers. The
\emph{Boolean semiring}~$\mathbb{B} = (\{0,1\},\vee,\wedge,0,1)$ has
disjunction~$\vee$ and conjunction~$\wedge$ as operations, and~$0$
(false) and~$1$ (true) as the identity elements of~$\vee$
and~$\wedge$. The
semiring~$\mathbb{Z}^{\geq 0}= (\{0,1,2,\ldots\},+, \times, 0, 1)$ of
non-negative integers has the standard arithmetic operations of
addition~$+$ and multiplication~$\times$, and~$0$ and~$1$ as the
identity elements of~$+$ and~$\times$.

A \emph{relation} over~$X$ is a
function~$R:\tuples(X)\rightarrow \{0,1\}$, while a \emph{bag}
over~$X$ is a function~$R:\tuples(X)\rightarrow \{0,1,2,\ldots\}$. We
write~$R(X)$ to emphasize the fact that~$R$ is a relation or a bag
over \emph{schema}~$X$.  If~$R$ is a relation or a bag, then the
\emph{support} of~$R$, denoted by~$\supp(R)$, is the set
of~$X$-tuples~$t$ that are assigned non-zero value, i.e.,
\begin{equation}
\supp(R) := \{ t \in \tuples(X) : R(t) \not= 0 \}. \label{def:support}
\end{equation}
Whenever no confusion arises, we write~$R'$ to denote~$\supp(R)$.  We
say that~$R$ is \emph{finite} if its support~$R'$ is a finite set. In
what follows, we will make the blanket assumption that all relations
and bags considered are finite, so we will omit the term ``finite".
Every relation~$R$ can be identified with its support~$R'$, thus every
relation~$R$ can be viewed as a finite set of~$X$-tuples. If~$R$ is a
bag and~$t$ is an~$X$-tuple, then the non-negative integer~$R(t)$ is
called the \emph{multiplicity} of~$t$ in~$R$; we will often
write~$t:R(t)$ to denote that the multiplicity of~$t$ in~$R$ is equal
to~$R(t)$.  Therefore, relations are bags in which the multiplicity of
each tuple is~$0$ or~$1$.  Every bag~$R$ can be viewed as a finite set
of elements of the form~$t: R(t)$, where~$t\in R'$. Thus,
if~$X=\{A,B\}$,
then~$R(A,B)= \{(a_1,b_1):2, (a_2,b_2): 1, (a_3,b_3):5\}$ represents
the bag~$R$ over~$X$ such that~$R(a_1,b_1)=
2$,~$R(a_2,b_2)= 1$,~$R(a_3,b_3)= 5$, and~$R(a,b) = 0$, for all other
pairs~$(a,b)$.  This bag can also be represented in tabular form as
follows:
\begin{center}
  \begin{tabular}{lll}
  $A$ & $B$  & \#  \\
  $a_1$ & $b_1:$ & 2 \\
   $a_2$ & $b_2:$ & 1 \\
   $a_3$ &  $b_3:$ & 5
  \end{tabular}
\end{center}
If~$R$ and~$S$ are two bags over the schema~$X$, then~$R$ is
\emph{bag-contained} in~$S$, denoted by~$R\subseteq_b S$,
if~$R(t)\leq S(t)$ for every~$X$-tuple~$t$.

Let~$R$ be a relation over~$X$ and assume that~$Z\subseteq X$.  The
\emph{projection of~$R$ on~$Z$}, denoted by~$R[Z]$, is the relation
over~$Z$ consisting of all projections~$t[Y]$ as~$t$ ranges over~$R$.

Let~$R$ be a bag over~$X$ and assume that~$Z\subseteq X$.  If~$t$ is
a~$Z$-tuple, then the \emph{marginal of~$R$ over~$t$} is defined by
\begin{equation} \label{eqn:marginal}
R(t) := \sum_{r \in R': \atop r[Z] = t} R(r).
\end{equation}
Thus, every bag~$R$ over~$X$ induces a bag over~$Z$, which is called
the \emph{marginal of~$R$ on~$Z$} and is denoted by~$R[Z]$. Note that
the preceding equation defines also the projection of a relation,
provided the sum is interpreted as the disjunction~$\vee$ over the
Boolean semiring.  It is easy to verify that the following facts hold
for every bag~$R$ over~$X$.
\begin{itemize} \itemsep=0pt
\item For all~$Z \subseteq X$, we have~$R'[Z] = R[Z]'$.
\item For all $W \subseteq Z \subseteq X$, we have $R[Z][W] = R[W]$.
\end{itemize}

If~$X$ and~$Y$ are sets of attributes, then we write~$XY$ as
shorthand for the union~$X \cup Y$. Accordingly, if~$x$ is
an~$X$-tuple and~$y$ is a~$Y$-tuple with the property
that~$x[X \cap Y] = y[X \cap Y]$, then we write~$xy$ to denote
the~$XY$-tuple that agrees with~$x$ on~$X$ and on~$y$ on~$Y$.  We
say that~\emph{$x$ joins with~$y$}, and that~\emph{$y$ joins
  with~$x$}, to \emph{produce} the tuple~$xy$.

If~$R$ is a relation over~$X$ and~$S$ is a relation over~$Y$, then
their \emph{join}~$R \Join S$ is the relation over~$XY$ consisting of
all tuples~$XY$-tuples~$t$ such that~$t[X]$ is in~$R$ and~$t[Y]$ is
in~$S$, i.e., all tuples of the form~$xy$ such that~$x\in
R'$,~$y\in R'$, and~$x$ joins with~$y$.  If~$R$ is a bag over~$X$
and~$S$ is a bag over~$Y$, then their \emph{bag join}~$R \Join_b S$ is
the bag over~$XY$ with support~$R'\Join S'$ and such that
every~$XY$-tuple~$t \in R'\Join S'$ has
multiplicity~$ (R\Join_b S)(t)= R(t[X])\times S(t[Y])$.

\section{Consistency of Two Bags} \label{sec:two-bag-cons}

We say that two relations~$R(X)$ and~$S(Y)$ are \emph{consistent} if
there exists a bag~$T(XY)$ with~$T[X] = R$ and~$T[Y] = S$.  Similarly,
we say that two bags~$R(X)$ and~$S(Y)$ are \emph{consistent} if there
exists a bag~$T(XY)$ with~$T[X] = R$ and~$T[Y] = S$, where now the
projections are computed according to Equation
(\ref{eqn:marginal}). In such a case, we say that~$T$ \emph{witnesses}
the consistency of~$R$ and~$S$.  A simple calculation shows that
if~$R(X)$ and~$S(Y)$ are consistent bags and~$T$ is a bag that
witnesses their consistency, then the support~$T'$ of~$T$ is a subset
of the join~$R' \Join S'$ of the supports.

\begin{lemma} \label{lem:inclusion} If~$R(X)$ and~$S(Y)$ are
  consistent bags and~$T(XY)$ is a bag that witnesses their
  consistency, then~$T' \subseteq R' \Join S'$.
\end{lemma}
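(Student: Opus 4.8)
The plan is to prove the inclusion tuple-by-tuple: fix an arbitrary $XY$-tuple $t \in T'$ and show that $t$ must lie in the relational join $R' \Join S'$ of the supports. By the definition of that join, it suffices to establish two facts, namely that $t[X] \in R'$ and that $t[Y] \in S'$. The compatibility condition needed to form the join tuple holds for free: since $t[X]$ and $t[Y]$ are both restrictions of the single tuple $t$, they agree on $X \cap Y$, and $t = t[X]\,t[Y]$. So the whole argument reduces to showing that the two projections of $t$ lie in the respective supports.

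First I would unwind the witnessing hypothesis. Since $T$ witnesses the consistency of $R$ and $S$, we have $T[X] = R$, where $T[X]$ is the marginal computed according to Equation~(\ref{eqn:marginal}). Evaluating at the $X$-tuple $t[X]$ gives
\[
R(t[X]) \;=\; T[X](t[X]) \;=\; \sum_{r \in T' :\, r[X] = t[X]} T(r).
\]
Because $t \in T'$ and $t[X] = t[X]$ trivially, the tuple $t$ is one of the indices appearing in this sum. The key point — indeed the only place anything is used — is that multiplicities are nonnegative, so no cancellation can occur: the entire sum is at least the single term $T(t)$, which is strictly positive since $t \in T' = \supp(T)$. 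Hence $R(t[X]) \geq T(t) > 0$, i.e. $t[X] \in \supp(R) = R'$. An identical computation from $T[Y] = S$ yields $t[Y] \in \supp(S) = S'$, and combining the two gives $t \in R' \Join S'$, completing the proof.

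There is no real obstacle here; the statement is a one-line consequence of the marginal definition together with nonnegativity of multiplicities. The one thing worth flagging is that nonnegativity is genuinely essential: over a structure permitting negative values (e.g. $\mathbb{Z}$-relations), a strictly positive term $T(t)$ in the marginal sum could be cancelled by other terms, so that $R(t[X])$ might vanish even though $t \in T'$, and the inclusion would fail. Thus this lemma isolates a specific feature of bag semantics, and it is precisely the monotonicity one expects to lean on again when analyzing witnesses later in the paper.
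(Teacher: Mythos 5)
Your proof is correct and takes essentially the same approach as the paper: fix $t \in T'$, note that the marginal sum defining $R(t[X])$ contains the strictly positive term $T(t)$ and cannot be cancelled by nonnegativity, conclude $t[X] \in R'$ and symmetrically $t[Y] \in S'$, hence $t \in R' \Join S'$. The paper's version is just a compressed form of this same argument (it writes $T(t) \geq 1$, so $R(t[X]) \geq 1$ by $R = T[X]$), with the nonnegativity and join-compatibility points left implicit.
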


\begin{proof} If~$t \in T'$, then~$T(t) \geq 1$, so~$R(t[X]) \geq 1$
  by~$R = T[X]$, and~$S(t[Y]) \geq 1$ by~$S = T[Y]$.
  Hence~$t[X] \in R'$ and~$t[Y] \in S'$, so~$t \in R' \Join S'$.
\end{proof}

If two relations~$R(X)$ and~$S(Y)$ are consistent, then their
join~$R\Join S$ witnesses their consistency; in fact,~$R\Join S$ is
the largest relation that has this property. In contrast, there are
consistent bags~$R(X)$ and~$S(Y)$ such that the support~$T'$ of every
bag~$T$ witnessing their consistency is a proper subset
of~$R'\Join S'$. An example of this is provided by the
bags~$R_1(AB) = \{ (1,2):{1},\; (2,2):{1} \}$
and~$S_1(BC) = \{ (2,1):{1},\; (2,2):{1} \}$; their consistency (as
bags) is witnessed by the
bags~$T_1(ABC) = \{ (1,2,2):{1},\; (2,2,1):{1} \}$
and~$T_2(ABC) = \{ (1,2,1):{1},\; (2,2,2):{1} \}$, but, as one can
easily verify, no other bag. This example can be extended as
follows. For~$n \geq 2$, let~$R_{n-1}(A,B)$ and~$S_{n-1}(B,C)$ be the
bags
  \begin{align*}
  & \{(1,2):{1},\; (2,2):{1},\;(1,3):{1},\;(3,3):{1},\; \ldots, \; (1,n):{1},(n,n):{1}\} \\
  & \{(2,1):{1},\; (2,2):{1},\;(3,1):{1},\;(3,3):{1},\; \ldots, \; (n,1):{1},(n,n):{1}\},
  \end{align*}
respectively.
%
%
For every~$n \geq 2$, the bags~$R_{n-1}$ and~$S_{n-1}$ are consistent
and there are exactly~$2^{n-1}$ bags witnessing their
consistency. Furthermore, these witnesses are pairwise incomparable in
the bag-containment sense and their supports are properly contained in
the support~$(R_{n-1}\Join_b S_{n-1})'$ of the bag
join~$R_{n-1}\Join_b S_{n-1}$. Note that the bags~$R_{n-1}$
and~$S_{n-1}$ are actually relations and that their
join~$R_{n-1} \Join S_{n-1}$ witnesses their consistency as relations,
but not as bags (where Equation (\ref{eqn:marginal}) is used to
compute the marginals).

With each pair of bags~$R(X)$ and~$S(Y)$, we associate the following
linear program~$P(R,S)$.  Let~$J = R' \Join S'$ be the join of the
supports of~$R$ and~$S$. For each~$t \in J$, there is a
variable~$x_t$. For each~$t \in J$ and~$r \in R'$,
define~$a_{r,t} = 1$ if~$t[X] = r$ and~$a_{r,t} = 0$
if~$t[X] \not= r$. Similarly, for each~$t \in J$ and~$s \in S'$,
define~$a_{s,t} = 1$ if~$t[Y] = s$ and~$a_{s,t} = 0$
if~$t[Y] \not= s$. The constraints of~$P(R,S)$~are:
\begin{equation}
\begin{array}{lll}
\sum_{t \in J} a_{r,t} x_t = R(r) & & \text{ for $r \in R'$, }\\
\sum_{t \in J} a_{s,t} x_t = S(s) & & \text{ for $s \in S'$, } \\
x_t \geq 0 & & \text{ for $t \in J$.}
\end{array} \label{eqn:lpfortwo}
\end{equation}
If we write the equations of~$P(R,S)$ in matrix form as~$A x = b$,
then the matrix~$A$ has special structure: its set of rows is
partitioned into two sets in such a way that every column has at most
one~$1$ entry in each part, and the rest of entries of the column
are~$0$.  This means that~$A$ is the vertex-edge incidence matrix of a
bipartite graph, so by Example~1 in Section~19.3 of Schrijver's book
\cite{Schrijver-book}, the matrix~$A$ is totally unimodular. By the
Hoffman-Kruskal Theorem (Corollary 19.2a in \cite{Schrijver-book}),
the polytope defined by~$P(R,S)$ is either empty or has integral
vertices.  Consequently,~$P(R,S)$ is feasible over the rationals if
and only if~$P(R,S)$ is feasible over the integers.  As we will soon
see, a different proof of this fact can be obtained using the
integrality theorem for max flow; for this, we will view~$P(R,S)$ as
the set of \emph{flow constraints} of an instance of the max-flow
problem, as we discuss next.

A network~$N = (V,E,c,s,t)$ is a directed graph~$G = (V,E)$ with a
non-negative weight~$c(u,v)$, called the \emph{capacity}, assigned to
each edge~$(u,v) \in E$, and two distinguished vertices~$s,t \in V$,
called the \emph{source} and the \emph{sink}.  A \emph{flow} for the
network is an assignment of non-negative weights~$f(u,v)$ on the
edges~$(u,v) \in E$ in such a way that both the capacity constraints
and the flow constraints are respected, i.e.,~$f(u,v) \leq c(u,v)$ for
each~$(u,v) \in E$,
and~$\sum_{v \in N^{-}(u)} f(v,u) = \sum_{w \in N^{+}(u)} f(u,w)$ for
each~$u \in V \setminus \{s,t\}$, where~$N^-(u)$ and~$N^+(u)$ denote
the sets of in-neighbors and out-neighbors of~$u$ in~$G$.  The
\emph{value} of such a flow is the
quantity~$\sum_{w \in N^+(s)}f(s,w) = \sum_{v \in N^-(t)} f(v,t)$,
where the equality follows from the flow constraints.  In the
\emph{max-flow problem}, the goal is to find a max flow, that is, a
flow of maximum value.  We say that a flow is \emph{saturated}
if~$f(s,w) = c(s,w)$ for every~$w \in N^+(s)$ and~$f(v,t) = c(v,t)$
for every~$v \in N^-(t)$.  It is obvious that if a saturated flow
exists, then every max flow is saturated.  The converse need not be
true,
unless~$\sum_{w \in N^+(s)} c(s,w) = \sum_{v \in N^-(t)} c(v,t)$.

With each pair~$R(X)$ and~$S(Y)$ of bags, we associate the following
network~$N(R,S)$.  The network has~$1+|R'|+|S'|+1$ vertices: one
source vertex~$s^*$, one vertex for each tuple~$r$ in the support~$R'$
of~$R$, one vertex for each tuple~$s$ in the support~$S'$ of~$S$, and
one target vertex~$t^*$. There is an arc of capacity~$R(r)$ from~$s^*$
to~$r$ for each~$r \in R'$, an arc of capacity~$S(s)$ from~$s$
to~$t^*$ for each~$s \in S'$, and an arc of
unbounded (i.e., very large) capacity from~$t[X]$ to~$t[Y]$ for
each~$t \in R' \Join S'$.

The next result yields several different characterization of the
consistency of two bags.

\begin{lemma} \label{lem:two-cons}
  Let $R(X)$ and $S(Y)$ be two bags. The following statements are
  equivalent:
  \begin{enumerate} \itemsep=0pt
  \item $R(X)$ and $S(Y)$ are consistent.
  \item $R[X \cap Y] = S[X \cap Y]$.
  \item $P(R,S)$ is feasible over the rationals.
  \item $P(R,S)$ is feasible over the integers.
  \item $N(R,S)$ admits a saturated flow.
  \end{enumerate}
\end{lemma}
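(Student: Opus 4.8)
The plan is to prove the cycle of implications
$(1)\Rightarrow(2)\Rightarrow(5)\Rightarrow(4)\Rightarrow(3)\Rightarrow(1)$,
which lets me handle each step with the cleanest available tool while touching every statement exactly once. The two easiest implications anchor the cycle: $(4)\Rightarrow(3)$ is trivial, since an integral feasible point is in particular a rational feasible point; and $(3)\Rightarrow(1)$ follows by reading a rational solution $x$ of $P(R,S)$ as a candidate witness---here I will need to clear denominators, since a witness $T(XY)$ must be a genuine bag with integer multiplicities, so this is really where totally unimodularity (or, equivalently, the integrality of the network) is doing its work. To make $(3)\Rightarrow(1)$ clean I will instead route through the integer solution, using that the constraints of $P(R,S)$ say exactly that the bag $T$ defined by $T(t)=x_t$ on $J=R'\Join S'$ (and $T(t)=0$ elsewhere) satisfies $\sum_{t[X]=r}x_t=R(r)$ and $\sum_{t[Y]=s}x_t=S(s)$, i.e., $T[X]=R$ and $T[Y]=S$; combined with Lemma~\ref{lem:inclusion} this pins down that any witness lives on $J$, so the correspondence between witnesses and feasible integer points is exact.

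For $(1)\Rightarrow(2)$ I would argue directly from the marginal identity: if $T(XY)$ witnesses consistency then for any $(X\cap Y)$-tuple $u$ we can compute $R[X\cap Y](u)=T[X][X\cap Y](u)=T[X\cap Y](u)$ using the composition-of-marginals fact $R[Z][W]=R[W]$ stated in the preliminaries, and symmetrically $S[X\cap Y](u)=T[X\cap Y](u)$, so the two marginals coincide. This is the short, purely algebraic direction. The genuinely substantive step is $(2)\Rightarrow(5)$: assuming $R[X\cap Y]=S[X\cap Y]$, I must produce a saturated flow in $N(R,S)$. The plan is to build the flow layer by layer over the common attributes. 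For each common-attribute tuple $u$, the arcs from $s^*$ through the $r$-vertices with $r[X\cap Y]=u$ carry total capacity $R[X\cap Y](u)$, and the arcs through the $s$-vertices with $s[X\cap Y]=u$ into $t^*$ carry total capacity $S[X\cap Y](u)$; by hypothesis these two totals are equal, so within each such $u$-block I can independently route a feasible flow that simultaneously saturates all the $s^*$-to-$r$ arcs and all the $s$-to-$t^*$ arcs, using the unbounded-capacity middle arcs (which exist precisely for joinable pairs, i.e.\ exactly when $r[X\cap Y]=s[X\cap Y]=u$). A convenient concrete routing inside a block is to send, from each $r$ with $r[X\cap Y]=u$ to each $s$ with $s[X\cap Y]=u$, the amount $R(r)\,S(s)/R[X\cap Y](u)$; this respects flow conservation at every $r$- and $s$-vertex and saturates the boundary arcs, and when $R[X\cap Y](u)=0$ the block is empty and contributes nothing.

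The remaining implication $(5)\Rightarrow(4)$ is where I would invoke the integrality theorem for max flow. Given a saturated flow, the capacities $R(r)$ and $S(s)$ on the boundary arcs are integral, and the middle arcs have unbounded capacity; by the max-flow integrality theorem there is an integral max flow, and since a saturated flow exists every max flow is saturated, so I obtain an integral saturated flow $f$. Setting $x_t=f(t[X],t[Y])$ on each middle arc yields an integral point, and the flow-conservation equations at the $r$-vertices and $s$-vertices are verbatim the equality constraints of $P(R,S)$ (saturation forcing the right-hand sides to be exactly $R(r)$ and $S(s)$), so $x$ is integer-feasible for $P(R,S)$, closing the cycle. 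The main obstacle I anticipate is the bookkeeping in $(2)\Rightarrow(5)$: I must be careful that flow conservation holds at \emph{every} intermediate vertex, not merely that the block totals match, and that the per-block routing I propose never requires a nonexistent middle arc---which is guaranteed because middle arcs are present for exactly the joinable pairs, the same pairs my routing uses. The total-unimodularity argument already sketched in the text gives an alternative, self-contained route to $(3)\Leftrightarrow(4)$ should the flow-based integrality argument feel too heavy, but I would keep the max-flow formulation as the conceptual backbone since it also delivers the promised polynomial-time witness-construction corollary.
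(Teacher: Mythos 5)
Your proposal is correct, and it relies on the same three ingredients as the paper's proof---the product formula $x_t = R(t[X])\,S(t[Y])/R(t[X\cap Y])$, the integrality theorem for max flow, and the exact correspondence between integer solutions of $P(R,S)$ and witnessing bags---but it arranges them in a different implication cycle. The paper proves $(1)\Rightarrow(2)\Rightarrow(3)\Rightarrow(5)\Rightarrow(1)$ and handles $(4)$ separately via the immediate equivalence $(1)\Leftrightarrow(4)$; that ordering is chosen so that integrality machinery is invoked exactly once, inside $(5)\Rightarrow(1)$. Your cycle $(1)\Rightarrow(2)\Rightarrow(5)\Rightarrow(4)\Rightarrow(3)\Rightarrow(1)$ folds $(4)$ into the chain, and your $(2)\Rightarrow(5)$ is the paper's $(2)\Rightarrow(3)$ and $(3)\Rightarrow(5)$ composed into one step: the per-block routing $f(r,s)=R(r)S(s)/R[X\cap Y](u)$ is exactly the paper's rational solution read as a flow, and your conservation and saturation checks go through (the denominator is nonzero whenever the block is nonempty, since $r\in R'$ forces $R[X\cap Y](u)\geq R(r)\geq 1$). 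The price of your arrangement is that $(3)\Rightarrow(1)$ must convert a rational solution into an integer one on its own, so you need a second appeal to integrality---either total unimodularity plus the Hoffman-Kruskal Theorem, which the paper establishes in the discussion preceding the lemma, or a second pass through the network as in the paper's $(3)\Rightarrow(5)\Rightarrow(1)$. On this point you correctly abandon your first instinct to ``clear denominators'': multiplying a rational solution by a common denominator $D$ scales the right-hand sides to $D\cdot R(r)$ and $D\cdot S(s)$, so it yields a witness for the scaled bags rather than for $R$ and $S$, and the proof would be wrong had you kept that step. As written, with the totally-unimodular route for $(3)\Rightarrow(1)$, your proof is sound; it is just slightly less economical than the paper's, which gets all five equivalences out of a single use of flow integrality.
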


\begin{proof}
  Let~$Z = X \cap Y$.  For (1) implies (2), assume that~$T$ witnesses
  the consistency of~$R$ and~$S$. Then~$T[X] = R$ and~$T[Y] = S$ and
  hence~$R[Z] = T[X][Z] = T[Z] = T[Y][Z] = S[Z]$. For (2) implies (3),
  assume that~$R[Z] = S[Z]$.  We show that~$P(R,S)$ is feasible over
  the rationals.  Let~$J = R' \Join S'$ and for each~$t \in J$
  set~$x_t := R(t[X])S(t[Y])/R(t[Z]) = R(t[X])S(t[Y])/S(t[Z])$, where
  the equality follows from the assumption that~$R[Z] = S[Z]$. For
  each fixed~$r \in R'$, let~$u = r[Z]$ and note
  that
  \begin{equation*}
  \sum_{t \in J} a_{r,t} x_t = (R(r)/S(u)) \sum_{t \in J:\atop
    t[X]=r} S(t[Y]) = (R(r)/S(u)) \sum_{s \in S':\atop s[Z]=u} S(s) = R(r).
  \end{equation*}
  For each fixed~$s \in R'$, let~$u = s[Z]$ and note
  that
  \begin{equation*}
  \sum_{t \in J} a_{s,t} x_t = (S(s)/R(u)) \sum_{t \in J:\atop
    t[Y]=s} R(t[X]) = (S(s)/R(u)) \sum_{r \in R':\atop r[Z]=u} R(s) = S(s).
  \end{equation*}
  Therefore, since~$x_t \geq 0$, we have shown that~$P(R,S)$ is
  feasible over the rationals.
  For (3) implies (5),
  let~$x^* = (x^*_t)_{t \in J}$ be a rational  solution
  for~$P(R,S)$ and let~$f$ be the following assignment for~$N(R,S)$:
  \begin{equation*}
  \begin{array}{lll}
  f(s^*,r) := c(s^*,r) = R(r) & & \text{ for each $r \in R'$;} \\
  f(t[X],t[Y]) := x^*_t & & \text{ for each $t\in J$;}\\
  f(s,t^*) := c(s,t^*) = S(s) & & \text{ for each $s \in S'$.}
  \end{array}
\end{equation*}
This assignment is a flow since the equations of~$P(R,S)$ say that the
flow-constraints are satisfied; furthermore, it is a saturated flow by
construction. For (5) implies (1), let~$g$ be a saturated flow
for~$N(R,S)$; in particular, this is a max flow for~$N(R,S)$. Since
all capacities in~$N(R,S)$ are integers, the integrality theorem for
the max-flow problem asserts that there is a max flow~$f$ consisting
of integers (see, e.g., \cite{DBLP:books/daglib/0007023}), which, of
course, is also a saturated flow.  Let~$T(XY)$ be the bag defined by
setting~$T(t) := f(t[X],t[Y])$ for each~$t \in R' \Join S'$. Since~$f$
is saturated, we have that~$f(s^*,r) = c(s^*,r) = R(r)$ for
each~$r \in R'$ and~$f(s,t^*) = c(s,t^*) = S(s)$ for each~$s \in
S'$. This means that the flow-constraints imply that~$T$ witnesses the
consistency of~$R$ and~$S$. Thus, we have established that statements
(1), (2), (3), and (5) are equivalent. The equivalence of statements
(1) and (4) is immediate from the definitions.
\end{proof}

The equivalence of statements (1) and (2) in Lemma \ref{lem:two-cons}
yields a simple polynomial-time test to determine the consistency of
two bags, namely, given two bags~$R(X)$ and~$S(y)$, check whether or
not~$R[X\cap Y] = S[X\cap Y]$. Furthermore, the equivalence of
statements (1) and (5) implies that there is a polynomial-time
algorithm for constructing a witness to the consistency of two
consistent bags. This is so, because it is well known that there are
polynomial-time algorithms for the max-flow problem. As a matter of
fact, there are strongly polynomial algorithms for this problem, such
as Orlin's algorithm \cite{DBLP:conf/stoc/Orlin13}, which finds a
maximum flow in time~$O(|V||E|)$. Thus, we have the following result.

\begin{corollary} \label{cor:two-cons-poly} There is a strongly
  polynomial-time algorithm that, given two bags, determines whether
  the bags are consistent and, if they are, constructs a bag
  witnessing their consistency.
\end{corollary}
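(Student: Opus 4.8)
The plan is to prove Corollary~\ref{cor:two-cons-poly} by directly assembling the algorithm from the equivalences already established in Lemma~\ref{lem:two-cons}, since each of the relevant equivalences is constructive. The decision part and the witness-construction part are handled separately, using statements (2) and (5) respectively.

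First I would address the decision problem, namely determining whether two given bags~$R(X)$ and~$S(Y)$ are consistent. By the equivalence of statements (1) and (2) in Lemma~\ref{lem:two-cons}, it suffices to test whether~$R[X \cap Y] = S[X \cap Y]$. To do this, I would compute the marginal~$R[Z]$ on~$Z = X \cap Y$ by iterating over the tuples~$r \in R'$, extracting the projection~$r[Z]$, and accumulating multiplicities into a dictionary keyed by~$Z$-tuples; the marginal~$S[Z]$ is computed the same way from~$S'$. Comparing the two resulting bags is then a matter of checking that they have the same support and the same multiplicity on each common tuple. Since each of~$R'$ and~$S'$ has polynomially many tuples and each projection and dictionary update is cheap, this phase runs in time polynomial in the size of the input, indeed in strongly polynomial time as it involves no arithmetic on numbers whose magnitude depends on the multiplicities beyond additions.

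Next, for the witness-construction part, I would invoke the chain of implications $(2) \Rightarrow (3) \Rightarrow (5) \Rightarrow (1)$ from the proof of Lemma~\ref{lem:two-cons}, whose final step explicitly builds a witnessing bag~$T$ from a saturated integral flow in the network~$N(R,S)$. Concretely, once consistency is confirmed, I would build~$N(R,S)$ as defined before the lemma, with~$1 + |R'| + |S'| + 1$ vertices and the prescribed arcs, and run a max-flow algorithm on it. Because all capacities are integers, the integrality theorem for max flow guarantees an integral maximum flow, and the equivalence of (1) and (5) shows that any such max flow is saturated and hence yields a witness~$T$ via~$T(t) := f(t[X],t[Y])$. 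The only subtlety is the ``unbounded'' capacity on the middle arcs: it can be replaced by any finite bound at least~$\sum_{r} R(r)$ without changing the set of saturated flows, keeping all capacities polynomially bounded.

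The main obstacle, and the reason the word \emph{strongly} in the statement matters, is the running-time bookkeeping rather than correctness, which is inherited wholesale from Lemma~\ref{lem:two-cons}. A naive max-flow routine runs in time polynomial in the numeric values of the capacities (i.e.\ the multiplicities), which is only weakly polynomial; to obtain a strongly polynomial bound I would appeal to a strongly polynomial max-flow algorithm such as Orlin's \cite{DBLP:conf/stoc/Orlin13}, running in time~$O(|V||E|)$ with no dependence on the capacity magnitudes. Here~$|V| = 2 + |R'| + |S'|$ and~$|E| \le |R'| + |S'| + |R' \Join S'|$, all polynomial in the input size, so the overall algorithm is strongly polynomial. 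I would close by remarking that the decision phase and the construction phase compose into a single strongly polynomial procedure that reports inconsistency when~$R[Z] \neq S[Z]$ and otherwise returns the bag~$T$, thereby establishing the corollary.
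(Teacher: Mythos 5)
Your proposal is correct and matches the paper's own argument essentially verbatim: the paper likewise settles the decision question via the equivalence $(1)\Leftrightarrow(2)$ of Lemma~\ref{lem:two-cons} (comparing the marginals on $X \cap Y$) and constructs the witness via the equivalence $(1)\Leftrightarrow(5)$, running Orlin's strongly polynomial $O(|V||E|)$ max-flow algorithm on $N(R,S)$ and reading off the witnessing bag from an integral saturated flow. Your added remark about replacing the ``unbounded'' middle-arc capacities by a finite bound is a sensible implementation detail the paper leaves implicit, but it does not change the approach.
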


We note that it is not known whether a strongly polynomial algorithm
for linear programming exists.
However, any algorithm for solving linear programming in time
polynomial in the bit-complexity of its data could be used to find a
witness to the consistency of two consistent bags. Simultaneously, the
algorithm could be asked to minimize any given linear function of the
multiplicities of the witnessing bag. Furthermore, it would accomplish
these tasks in time polynomial in the bit-complexity representation of
the input bags and the objective function. This follows from
Lemma~\ref{lem:two-cons} combined with the fact that, by the
Hoffman-Kruskal Theorem, all vertices of the polytope defined
by~$P(R,S)$ are integral.

\section{Consistency of Three or More Bags} \label{sec:three-bag-cons}

Let~$R_1(X_1),\ldots,R_m(X_m)$ be bags over the
schemas~$X_1,\ldots,X_m$. We say that the collection~$R_1,\ldots,R_m$
is \emph{globally consistent} if there is a bag~$T$
over~$X_1 \cup \cdots \cup X_m$ such that~$R_i = T[X_i]$ for
all~$i \in [m]$.  We say that such a bag \emph{witnesses} the global
consistency of~$R_1,\ldots,R_m$.  We also say that the
bags~$R_1,\ldots,R_m$ are \emph{pairwise consistent} if for
every~$i,j \in [m]$ we have that~$R_i[X_i]$ and~$R_j[X_j]$ are
consistent.

Corresponding notions of global consistency and pairwise consistency
can be defined for relations, the only difference being that
the~$T[X_i]$'s and the~$R_i[X_i]$'s are projections of relations,
instead of marginals of bags. The following facts are well known (see,
e.g., \cite{DBLP:journals/ipl/HoneymanLY80}):
\begin{itemize} \itemsep=0pt
\item If~$R_1,\ldots,R_m$ are relations and~$T$ is a relation
  witnessing the global consistency of the collection~$R_1,\ldots,R_m$,
  then~$T \subseteq R_1\Join \cdots \Join R_m$.
\item If~$R_1,\ldots,R_m$ are relations, then the
  collection~$R_1,\ldots,R_m$ is globally consistent if and only
  if~$(R_1\Join \cdots \Join R_m)[X_i]= R_i$ for all~$i = 1,\ldots,m$.
\end{itemize}
Consequently, if the collection $R_1,\ldots,R_m$ is globally
consistent, then the join $R_1\Join \cdots \Join R_m$ is the \emph{largest}
relation witnessing their consistency. As seen in Section \ref{sec:prelims}, there are bags that are consistent, but their consistency is not witnessed by their bag-join.

From the definitions, it follows that
if~$R_1,\ldots,R_m$ are globally consistent bags, then they are also
pairwise consistent. The converse, however, need not be true, in
general. In fact, the converse fails even for relations.
For example, the relations~$R(AB) =
\{00, 11\}$,~$S(BC) = \{01, 10\}$,~$T(AC)=
\{00,11\}$ are pairwise consistent but not globally consistent.
The interplay between pairwise consistency and global consistency of relations has been extensively studied in database theory. We summarize some of the main findings next.

Pairwise consistency is a necessary, but not
sufficient, condition for global consistency of relations.
Beeri, Fagin, Maier, and Yannakakis
\cite{BeeriFaginMaierYannakakis1983}
 characterized the set of schemas
for which pairwise consistency is a necessary and sufficient condition
for global consistency of relations. Their characterization involves notions from hypergraph theory that we now review.

\paragraph{Acyclic Hypergraphs} A \emph{hypergraph} is a
pair~$H = (V,E)$, where~$V$ is a set of \emph{vertices} and~$E$ is a
set of \emph{hyperedges}, each of which is a non-empty subset of~$V$.
Every collection~$X_1,\ldots,X_m$ of sets of attributes can be
identified with a hypergraph~$H=(V,E)$,
where~$V = X_1\cup \cdots \cup X_m$ and~$E
=\{X_1,\ldots,X_m\}$. Conversely, every hypergraph~$H = (V,E)$ gives
rise to a collection~$X_1,\ldots,X_m$ of sets of attributes,
where~$X_1,\dots,X_m$ are the hyperedges of~$H$. Thus, we can move
seamlessly from collections of sets of attributes to hypergraphs, and
vice versa.  The notion of an \emph{acyclic} hypergraph generalizes
the notion of an acyclic graph. Since we will not work directly with
the definition of an acyclic hypergraph, we refer the reader to
\cite{BeeriFaginMaierYannakakis1983} for the precise
definition. Instead, we focus on other notions that are equivalent to
hypergraph acyclicity and will be of interest to us in the sequel.

\paragraph{Conformal and Chordal Hypergraphs} The \emph{primal} graph
of a hypergraph~$H = (V,E)$ is the undirected graph that has~$V$ as
its set of vertices and has an edge between any two distinct vertices
that appear together in at least one hyperedge of~$H$. A
hypergraph~$H$ is \emph{conformal} if the set of vertices of every
clique (i.e., complete subgraph) of the primal graph of~$H$ is
contained in some hyperedge of~$H$.  A hypergraph~$H$ is
\emph{chordal} if its primal graph is chordal, that is, if every cycle
of length at least four of the primal graph of~$H$ has a chord.  To
illustrate these concepts, let~$V_n=\{A_1,\ldots,A_n\}$ be a set
of~$n$ vertices and consider the hypergraphs
\begin{eqnarray}
P_n &  = &  (V_n, \{ {A_1,A_2}\}, \ldots, \{A_{n-1},A_n\})  \label{path-hyper}\\
C_n & = & (V_n, \{A_1,A_2\}, \ldots, \{A_{n-1},A_n\}, \{A_n,A_1\}) \label{cycle-hyper}\\
H_n & = & (V_n, \{ V_n\setminus \{A_i\}: 1\leq i\leq n \}) \label{clique-hyper}
\end{eqnarray}
If~$n\geq 2$, then the hypergraph~$P_n$ is both conformal and chordal.
The hypergraph~$C_3 = H_3$ is chordal, but not conformal. For
every~$n\geq 4$, the hypergraph~$C_n$ is conformal, but not chordal,
while the hypergraph~$H_n$ is chordal, but not conformal.

\paragraph{Running Intersection Property} We say that a hypergraph~$H$
has the \emph{running intersection property} if there is a
listing~$X_1,\ldots,X_m$ of all hyperedges of~$H$ such that for
every~$i \in [m]$ with~$i \geq 2$, there exists a~$j < i$ such
that~$X_i \cap (X_1 \cup \cdots \cup X_{i-1}) \subseteq X_j$.

\paragraph{Join Tree} A \emph{join tree} for a hypergraph~$H$ is an
undirected tree~$T$ with the set~$E$ of the hyperedges of~$H$ as its
vertices and such that for every vertex~$v$ of~$H$, the set of
vertices of~$T$ containing~$v$ forms a subtree of~$T$, i.e., if~$v$
belongs to two vertices~$X_i$ and~$X_j$ of~$T$, then~$v$ belongs to
every vertex of~$T$ in the unique path from~$X_i$ to~$X_j$ in~$T$.

\paragraph{Local-to-Global Consistency Property for Relations}
Let~$H$ be a hypergraph and let~$X_1,\dots,X_m$ be a listing of all
hyperedges of~$H$. We say that~$H$ has the \emph{\ltgc~for
  relations} if every pairwise consistent
collection~$R_1(X_1),\ldots,R_m(X_m)$ of relations of
schema~$X_1,\ldots,X_m$ is globally consistent.

We are now ready to state the main result in Beeri et al.\
\cite{BeeriFaginMaierYannakakis1983}.

\begin{theorem} [Theorem 3.4 in \cite{BeeriFaginMaierYannakakis1983}] \label{thm:BFMY}
Let~$H$ be a hypergraph. The following statements are equivalent:
\begin{enumerate} \itemsep=0pt
\item[(a)] $H$ is an acyclic hypergraph.
\item[(b)] $H$ is a conformal and chordal hypergraph.
\item[(c)] $H$ has the running intersection property.
\item[(d)] $H$ has a join tree.
\item[(e)] $H$ has the \ltgc~for relations.
\end{enumerate}
\end{theorem}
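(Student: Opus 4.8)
The plan is to treat the five conditions as a cycle of implications, grouping them into a purely combinatorial core and a consistency-theoretic part. Conditions (b), (c), and (d) are statements about the hypergraph and its primal graph alone, so I would first establish $(b)\Leftrightarrow(c)\Leftrightarrow(d)$ by hypergraph and chordal-graph theory. I would then attach the database condition (e) to this core by proving the two directions $(d)\Rightarrow(e)$ and $\neg(b)\Rightarrow\neg(e)$ separately, since they rest on completely different ideas: a join argument versus an explicit counterexample. Finally I would connect acyclicity (a) to the core; since the paper defers the definition to the GYO ear-removal procedure, I would invoke the classical fact that a hypergraph reduces to the empty one under ear removal if and only if it has a join tree, giving $(a)\Leftrightarrow(d)$ and closing the cycle.

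For the combinatorial core I would argue $(c)\Leftrightarrow(d)$ directly: given a join tree, listing its vertices (the hyperedges) in the order of a root-first traversal witnesses the running intersection property, because the subtree condition forces $X_i\cap(X_1\cup\cdots\cup X_{i-1})$ to sit inside the parent of $X_i$; conversely, from a running-intersection listing I would attach each $X_i$ (for $i\ge2$) to a witnessing $X_j$ with $j<i$ and check that every vertex occupies a subtree, so the result is a join tree. The deeper equivalence is $(b)\Leftrightarrow(d)$, where the nontrivial direction is $(b)\Rightarrow(d)$. Here I would use that a chordal primal graph has a perfect elimination ordering, which produces a clique tree on the maximal cliques satisfying the running intersection property; conformality then guarantees that each maximal clique is contained in some hyperedge, which lets me lift the clique tree to a join tree of $H$ itself (after discarding hyperedges contained in others, a normalization that affects none of the five conditions). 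For the reverse direction a join tree localizes every clique and every long cycle into a single hyperedge, yielding conformality and chordality.

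For $(d)\Rightarrow(e)$ I would fix a running-intersection listing $X_1,\ldots,X_m$ and a pairwise consistent collection $R_1,\ldots,R_m$, set $T_i=R_1\Join\cdots\Join R_i$, and prove by induction on $i$ that $T_i[X_j]=R_j$ for every $j\le i$; then $T=T_m$ witnesses global consistency. The inductive step is the classical semijoin-closure argument: the running intersection property gives a $j<i$ with $X_i\cap(X_1\cup\cdots\cup X_{i-1})\subseteq X_j$, and pairwise consistency of $R_i$ and $R_j$ ensures that joining $R_i$ onto $T_{i-1}$ neither drops a tuple of $T_{i-1}$ nor loses a tuple of $R_i$, so all previously established projections are preserved and the new projection $T_i[X_i]=R_i$ is attained.

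The main obstacle is the converse $\neg(b)\Rightarrow\neg(e)$, where I must manufacture, for any non-conformal or non-chordal $H$, a pairwise consistent family that is globally inconsistent. If the primal graph has a chordless cycle $A_1A_2\cdots A_kA_1$ with $k\ge4$, I would put on the hyperedges carrying its edges a family of $\{0,1\}$-valued \emph{parity} relations, each encoding either equality or inequality of its two cycle-endpoints, chosen so that an odd number of inequalities occurs around the cycle, generalizing the triangle example $R(AB)=\{00,11\}$, $S(BC)=\{01,10\}$, $T(AC)=\{00,11\}$; every binary equality/inequality relation is a bijection on $\{0,1\}$, so all pairs remain consistent, while composing around the cycle forces the identity and contradicts the odd parity, and chordlessness is exactly what prevents a ``shortcut'' hyperedge from reconciling the constraints. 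In the non-conformal case there is a clique of the primal graph contained in no hyperedge, and I would build relations on the hyperedges covering its edges that are pairwise consistent but collectively force a tuple on the whole clique that no single hyperedge can host. The delicate points, namely making the gadget work for an arbitrary chordless cycle length and in the presence of possibly large surrounding hyperedges, and ensuring the enforced constraints stay pairwise (hence locally) satisfiable while remaining globally contradictory, are where I expect the real work to lie.
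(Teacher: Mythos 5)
A preliminary remark on the comparison itself: the paper does not prove Theorem~\ref{thm:BFMY} at all — it quotes it from \cite{BeeriFaginMaierYannakakis1983} — so the only internal benchmark is the proof of its bag analogue, Theorem~\ref{thm:BFMY-bags}. Your architecture coincides with that proof's: treat (a)--(d) as a combinatorial block, prove (c)$\Rightarrow$(e) by induction along a running-intersection listing, and prove $\neg$(b)$\Rightarrow\neg$(e) by gadgets on the minimal non-chordal and non-conformal hypergraphs $C_n$ and $H_n$, transported to $H$ by a preservation argument. Your (d)$\Rightarrow$(e) step, with $T_i = R_1 \Join \cdots \Join R_i$ and the semijoin-closure induction, is the classical relational argument and is sound; it is worth noting that this is exactly the step that does not survive the passage to bags (the bag join need not witness consistency), which is why the paper's Step~1 instead takes $T_i$ to be an abstract witness of the consistency of $T_{i-1}$ and $R_i$ supplied by Lemma~\ref{lem:two-cons}. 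Your cycle gadget is also correct: with an odd number of inequality relations around a chordless cycle, the join is empty, so it cannot project onto the nonempty $R_i$'s.

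There are, however, two genuine gaps, both in $\neg$(b)$\Rightarrow\neg$(e). First, the non-conformal case has no construction, and the contradiction mechanism you indicate — ``force a tuple on the whole clique that no single hyperedge can host'' — is not a contradiction: a global witness lives over \emph{all} attributes and is under no obligation to fit inside any one hyperedge. The inconsistency must instead be argued either by showing the join of the candidate relations is empty, or by showing that its projections are \emph{proper} subsets of the $R_i$'s; and the parity (mod~$2$) idea does not uniformly give the first option. Concretely, put parity relations on $H_4$ (even sum on the triples $X_1,X_2,X_3$, odd sum on $X_4$): the join is the single tuple $(1,1,1,0)$, which is nonempty, so any ``impossible tuple'' argument collapses; the family is in fact globally inconsistent, but only for the subtler reason that a one-tuple relation cannot project onto the four-tuple relations $R_i$. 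The paper's Tseitin-style gadget is designed precisely to avoid this case split: over domain $\{0,\ldots,d-1\}$ with sums $\equiv 0$ (and one sum $\equiv 1$) mod $d$, where $d$ is the regularity ($d = n-1$ for $H_n$), summing all constraints yields $0 \equiv 1 \pmod{d}$, so the join of supports is always empty. Second, the lifting you defer as ``real work'' is a genuine missing piece, not bookkeeping: relations must be defined on \emph{every} hyperedge of $H$ — including hyperedges meeting the cycle or clique $W$ in at most one vertex, and hyperedges with many attributes outside $W$ — and one must verify both that pairwise consistency survives the padding and that global inconsistency reflects back down to the induced sub-hypergraph. This is exactly the content of Lemma~\ref{lem:characconf}, Lemma~\ref{lem:cons-preserv}, and Corollary~\ref{lem:preserv1} in the paper (stated there for bags; the relational case needs the same default-value padding for deleted vertices and projection for covered edges).
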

As an illustration, if~$n\geq 2$, the hypergraph~$P_n$ is acyclic,
hence it has the \ltgc~for relations. In contrast, if~$n\geq 3$, the
hypergraphs~$C_n$ and~$H_n$ are cyclic, hence they do not have the
\ltgc~for relations.

In what follows, we will show that the preceding Theorem \ref{thm:BFMY} also holds for bags.  We need the following definition.   Let~$H$ be a hypergraph and let~$X_1,\dots,X_m$ be a listing of all
hyperedges of~$H$. We say that~$H$ has the \emph{\ltgc~for
  bags} if every pairwise consistent
collection~$R_1(X_1),\ldots,R_m(X_m)$ of relations of
schema~$X_1,\ldots,X_m$ is globally consistent.

\begin{theorem}   \label{thm:BFMY-bags}
Let~$H$ be a hypergraph. The following statements are equivalent:
\begin{enumerate} \itemsep=0pt
\item[(a)] $H$ is an acyclic hypergraph.
\item[(b)] $H$ is a conformal and chordal hypergraph.
\item[(c)] $H$ has the running intersection property.
\item[(d)] $H$ has a join tree.
\item[(e)] $H$ has the \ltgc~for bags.
\end{enumerate}
\end{theorem}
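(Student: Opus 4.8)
\section*{Proof proposal}

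Since Theorem~\ref{thm:BFMY} already gives the equivalence of (a), (b), (c), and (d), it suffices to fold statement~(e) into this block of equivalences. The plan is to prove two implications: the \emph{positive} direction (c) $\Rightarrow$ (e), so that every acyclic schema has the \ltgc~for bags, and the \emph{negative} direction $\neg$(a) $\Rightarrow$ $\neg$(e), established in contrapositive form by exhibiting, for each cyclic $H$, a pairwise consistent collection of bags that is not globally consistent. Together with Theorem~\ref{thm:BFMY} these two implications close the cycle of equivalences.

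For (c) $\Rightarrow$ (e) I would argue by induction along a running-intersection ordering. Fix an ordering $X_1,\dots,X_m$ of the hyperedges witnessing the running intersection property, let $Y_i = X_1\cup\cdots\cup X_i$, and let $R_1,\dots,R_m$ be pairwise consistent bags. I build bags $T_i$ over $Y_i$ with $T_i[X_k]=R_k$ for all $k\le i$, starting from $T_1=R_1$. At step $i$, pick $j<i$ with $Z:=X_i\cap Y_{i-1}\subseteq X_j$, as guaranteed by the running intersection property. Because $Z\subseteq X_j\subseteq Y_{i-1}$, we have $T_{i-1}[Z]=T_{i-1}[X_j][Z]=R_j[Z]$ by the induction hypothesis, while $Z\subseteq X_i\cap X_j$ together with pairwise consistency (and the equivalence of (1) and (2) in Lemma~\ref{lem:two-cons}) gives $R_j[Z]=R_i[Z]$. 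Hence $T_{i-1}$ and $R_i$ agree on their common attributes $Y_{i-1}\cap X_i=Z$, so Lemma~\ref{lem:two-cons} produces a witness bag $T_i$ over $Y_i$ with $T_i[Y_{i-1}]=T_{i-1}$ and $T_i[X_i]=R_i$; projecting through $Y_{i-1}$ preserves $T_i[X_k]=R_k$ for $k<i$. The final bag $T_m$ witnesses global consistency. This direction is clean precisely because Lemma~\ref{lem:two-cons} reduces two-bag consistency to equality of marginals.

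For $\neg$(a) $\Rightarrow$ $\neg$(e) I would use the equivalence (a) $\Leftrightarrow$ (b): a cyclic $H$ is either non-chordal or non-conformal. In the non-chordal case the primal graph has a chordless cycle $A_1,\dots,A_k$ ($k\ge 4$); I give the cycle attributes the domain $\{0,1\}$, collapse all other attributes to singleton domains, and place on each hyperedge the exclusive-or constraints $A_i\oplus A_{i+1}=b_i$ inherited from the cycle edges it contains, choosing $b_1,\dots,b_k$ with $\bigoplus_i b_i=1$. Chordlessness ensures each hyperedge meets the cycle in at most an edge, so the constraints are consistently assignable; after normalizing multiplicities so that all these bags share a common total, adjacent bags marginalize to the same uniform marginal $\{0{:}1,\,1{:}1\}$ on each shared cycle attribute, which yields pairwise consistency via Lemma~\ref{lem:two-cons}. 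Global inconsistency follows from nonnegativity: any witness $T$ must vanish on every assignment violating some $A_i\oplus A_{i+1}=b_i$, but summing the constraints around the cycle contradicts $\bigoplus_i b_i=1$, forcing $T\equiv 0$ and hence a contradiction with the nonzero marginals. The non-conformal case is handled analogously, starting from a minimal clique $K$ of the primal graph contained in no hyperedge and imposing exclusive-or constraints with odd total parity along a cycle through $K$; this is the Tseitin-style ingredient advertised in the introduction, with the triangle $C_3=H_3$ as the base case.

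The main obstacle is the negative direction, and within it the bookkeeping needed to guarantee \emph{exact} pairwise consistency as bags once the gadget is embedded in a real hypergraph whose hyperedges are wide and numerous. Unlike the relational setting, agreement of supports is not enough: I must control the multiplicities of all induced marginals, which amounts to making the counts of free extensions match across hyperedges that share attributes (including purely non-cycle overlaps and, in the non-conformal case, chord pairs inside the clique). Getting a single construction that covers both the non-chordal and the non-conformal cases while keeping these multiplicities uniform, and verifying that the parity obstruction to global consistency survives the normalization, is where the real work lies; by contrast, establishing global inconsistency once pairwise consistency is in hand is a short nonnegativity-plus-parity argument.
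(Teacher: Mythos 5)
Your positive direction is exactly the paper's Step~1 (induction along a running-intersection ordering, with Lemma~\ref{lem:two-cons} supplying each new witness), and your non-chordal construction is a sound variant of the paper's negative direction: the paper first reduces to the minimal cycle $C_n$ via safe-deletion operations (Lemma~\ref{lem:cons-preserv}, Corollary~\ref{lem:preserv1}) and then applies a Tseitin-style parity construction there, whereas you build the bags directly on $H$ by collapsing non-cycle attributes to singleton domains. Since a chordless cycle meets every hyperedge of $H$ in at most one of its edges, your XOR gadget with odd total parity is indeed pairwise consistent, and every cycle edge lies in some hyperedge of $H$, so any witness tuple would have to satisfy all the XOR constraints, contradicting the odd parity. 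That half is fine.

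The genuine gap is the non-conformal case, and it is not ``bookkeeping'': over a binary alphabet your construction is pairwise \emph{in}consistent whenever $|K|\geq 4$, so the approach fails outright. If $K=\{A_1,\ldots,A_n\}$ is a minimal clique contained in no hyperedge, the relevant hyperedge traces on $K$ are the sets $K\setminus\{A_i\}$, each of size $n-1\geq 3$, so each trace contains several edges of your Hamiltonian cycle. Take $n=4$ with parities $b_1=b_2=b_3=0$, $b_4=1$: the bag on the trace $\{A_1,A_2,A_3\}$ (constraints $A_1\oplus A_2=0$, $A_2\oplus A_3=0$) has support $\{000,111\}$, whose marginal on $\{A_1,A_3\}$ is supported on $\{00,11\}$, while the bag on the trace $\{A_1,A_3,A_4\}$ (constraints $A_3\oplus A_4=0$, $A_4\oplus A_1=1$) has marginal on $\{A_1,A_3\}$ supported on $\{10,01\}$. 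The supports are disjoint, so these two bags violate condition~(2) of Lemma~\ref{lem:two-cons}. This conflict is unavoidable with XOR: eliminating a vertex from a path of two-variable parity constraints produces a derived parity constraint, and the odd total parity that is supposed to kill global consistency forces precisely these derived constraints to contradict each other already pairwise. The paper's construction escapes this by enlarging the alphabet: for the $k$-uniform, $d$-regular hypergraph $H_n$ (with $k=d=n-1$) it takes values in $\{0,\ldots,d-1\}$ and imposes a single constraint ``coordinate sum $\equiv 0$ (resp.\ $1$) mod $d$'' per hyperedge; then every tuple on a proper subset $Z$ of a hyperedge has exactly $d^{\,k-|Z|-1}$ extensions achieving any prescribed residue, so all marginals are uniform and pairwise consistency holds, while $d$-regularity still yields the mod-$d$ contradiction $0\equiv 1$ for any putative global witness. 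Your appeal to $C_3=H_3$ as a ``base case'' does not help, because the direct construction has no inductive mechanism to pass from the triangle to larger cliques; some version of the mod-$d$ idea (or the paper's reduction to $H_n$ plus that gadget) is genuinely needed.
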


Before embarking on the proof of Theorem \ref{thm:BFMY-bags}, we need
some additional notions about hypergraphs and two technical lemmas.
We begin with the definitions of the notions needed.

Let~$H = (V,E)$ be a hypergraph. The \emph{reduction} of~$H$ is the
hypergraph~$R(H)$~whose set of vertices is~$V$ and whose hyperedges
are those hyperedges~$X \in E$ that are not included in any other
hyperedge of~$H$. A hypergraph~$H$ is \emph{reduced} if~$H=R(H)$.
If~$W \subseteq V$, then the \emph{hypergraph induced by~$W$ on~$H$}
is the hypergraph~$H[W]$~whose set of vertices is~$W$ and whose
hyperedges are the non-empty subsets of the form~$X \cap W$,
where~$X \in E$ is a hyperedge of~$H$; in
symbols,~$H[W] = (W, \{X \cap W: X \in E\}\setminus \{\emptyset\})$.

Let~$H = (V,E)$ be a hypergraph. For a vertex~$u \in V$, we
write~$H\setminus u$ for the hypergraph induced by~$V\setminus\{u\}$
on~$H$. For an edge~$e \in E$, we write~$H\setminus e$ for the
hypergraph with~$V$ as the set of its vertices and
with~$E \setminus \{e\}$ as the set of its edges. Let~$H' = (V',E')$
be another hypergraph.  We say that~$H'$ is obtained from~$H$ by a
\emph{vertex-deletion} if~$H' = H\setminus u$ for some~$u \in V$. We
say that~$H'$ is obtained from~$H$ by a \emph{covered-edge-deletion}
if~~$H' = H\setminus e$ for some~$e \in E$ such that~$e \subseteq f$
for some~$f \in E\setminus \{e\}$. In either case, we say that~$H'$ is
obtained from~$H$ by a \emph{safe-deletion operation}. We say that a
sequence of safe-deletion operations \emph{transforms~$H$ to~$H'$}
if~$H'$ can be obtained from~$H$ by starting with~$H$ and applying the
operations in order.

\begin{lemma} \label{lem:characconf} For every hypergraph~$H=(V,E)$
  the following statements hold:
\begin{enumerate}  \itemsep=0pt
\item $H$ is not chordal if and only if there
  exists~$W \subseteq V$~with $|W|\geq 4$ and such
  that~$R(H[W]) \cong C_n$, where~$n = |W|$.
\item $H$ is not conformal if and only if there
  exists~$W \subseteq V$~with $|W|\geq 3$ and such
  that~$R(H[W]) \cong H_n$, where~$n = |W|$.
\end{enumerate}
Moreover, there exist a polynomial-time algorithm that, given a
hypergraph~$H$ that is not chordal or not conformal, finds both a
set~$W$ as stated in (1) or (2) and a sequence of safe-deletion
operations that transforms~$H$ to~$R(H[W])$.
\end{lemma}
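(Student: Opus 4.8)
The plan is to reduce both biconditionals to statements about the primal graph of~$H$, using two observations I would establish first. (i) Taking the reduction does not change the primal graph: if a hyperedge~$e$ is contained in another hyperedge~$f$, then every pair of vertices inside~$e$ already lies inside~$f$, so deleting~$e$ removes no edge of the primal graph; since every hyperedge is contained in a maximal one and maximal hyperedges are never deleted by the reduction, the primal graphs of~$H$ and~$R(H)$ coincide. (ii) The primal graph of an induced subhypergraph is the induced subgraph of the primal graph: two vertices~$u,v \in W$ lie together in some~$X \cap W$ exactly when they lie together in some~$X \in E$. Combining (i) and (ii), the primal graph of~$R(H[W])$ is exactly the subgraph of the primal graph of~$H$ induced on~$W$. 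Since chordality and conformality are by definition properties of the primal graph, this is the bridge I would use throughout.

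For statement~(1), the \emph{if} direction is immediate from the bridge: if~$R(H[W]) \cong C_n$ with~$n \geq 4$, then the primal graph of~$H$ induced on~$W$ is an~$n$-cycle, a chordless cycle of length at least four, so~$H$ is not chordal. For the \emph{only if} direction I would take~$W$ to be the vertex set of a chordless cycle of length~$n \geq 4$ in the primal graph of~$H$. Then the primal graph of~$H[W]$ is the~$n$-cycle, which is triangle-free, so no three vertices of~$W$ are pairwise adjacent and every hyperedge of~$H[W]$ has size at most two. The size-two hyperedges are then precisely the~$n$ edges of the cycle (each adjacent pair lies in some hyperedge, which can only be that pair), while every singleton is covered by a cycle edge; hence the reduction keeps exactly the~$n$ cycle edges, giving~$R(H[W]) \cong C_n$.

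For statement~(2), the \emph{if} direction runs similarly: if~$R(H[W]) \cong H_n$ with~$n \geq 3$, the bridge makes~$W$ a clique of size~$n$; were~$W$ contained in a hyperedge~$X$, then~$W = X \cap W$ would be a maximal, hence surviving, hyperedge of~$R(H[W])$, contradicting that all hyperedges of~$H_n$ have size~$n-1$, so the clique~$W$ lies in no hyperedge and~$H$ is not conformal. For the \emph{only if} direction I would let~$W$ be a \emph{minimal} clique of the primal graph not contained in any hyperedge; minimality forces~$|W| \geq 3$, since singletons and edges are always covered. Every hyperedge of~$H[W]$ is then a proper subset of~$W$, whereas by minimality each~$(n-1)$-subset~$W \setminus \{v\}$ lies in a hyperedge of~$H$ and hence equals a hyperedge~$X \cap W$ of~$H[W]$; the reduction keeps exactly these~$n$ maximal~$(n-1)$-subsets, so~$R(H[W]) \cong H_n$.

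Finally, the algorithm proceeds in two phases: find~$W$, then emit the safe-deletion sequence. When~$H$ is not chordal, a chordless cycle of length at least four is located by standard polynomial-time chordality recognition on the primal graph, which returns such a cycle as a certificate. When~$H$ is not conformal, cliques are expensive to handle in general, and this is the step I expect to be the main obstacle; I would bypass it by showing non-conformality is always witnessed by a \emph{triple} of hyperedges. Concretely, if~$C$ is a minimal bad clique, then for each~$v \in C$ minimality provides a hyperedge~$e_v$ with~$e_v \cap C = C \setminus \{v\}$, and for any three distinct~$u,v,w \in C$ the set~$(e_u \cap e_v) \cup (e_u \cap e_w) \cup (e_v \cap e_w)$ is a clique that contains~$C$, hence is itself not contained in any hyperedge. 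It therefore suffices to scan all~$O(|E|^3)$ triples~$E_1,E_2,E_3$, form~$S = (E_1 \cap E_2) \cup (E_1 \cap E_3) \cup (E_2 \cap E_3)$, and test whether~$S$ lies in some hyperedge; a failing triple yields a bad clique~$S$, which I shrink to a minimal one by repeatedly deleting a vertex while the remainder stays bad. Once~$W$ is found, I realize~$H \to R(H[W])$ by first deleting, one at a time, every vertex of~$V \setminus W$ (successive vertex-deletions yield~$H[W]$), then deleting the non-maximal hyperedges of~$H[W]$ as covered-edge-deletions; performing these so that each removed hyperedge is still covered by a maximal hyperedge, which is never deleted, keeps every step valid and terminates exactly at~$R(H[W])$. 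All phases run in polynomial time.
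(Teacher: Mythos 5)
Your proposal is correct, but it takes a genuinely more self-contained route than the paper, which disposes of this lemma in a few lines by outsourcing the hard parts: the paper declares statement (1) straightforward, cites Brault-Baron's survey for statement (2), and obtains the algorithm by iteratively deleting vertices whose removal preserves non-chordality (resp.\ non-conformality), using known polynomial-time \emph{decision} procedures for chordality (Rose--Tarjan--Lueker) and for conformality (Gilmore's theorem, from Berge's book) as black-box oracles, and only then emitting the same final safe-deletion sequence you describe (delete $V \setminus W$ vertex by vertex, then delete covered edges, each still covered by a surviving maximal hyperedge). You instead prove everything from scratch: your ``bridge'' observation (the primal graph of $R(H[W])$ is the subgraph of the primal graph of $H$ induced on $W$) reduces both equivalences to clean arguments about chordless cycles and minimal uncovered cliques, and your triple-of-hyperedges argument in the non-conformal case is precisely a proof of the nontrivial direction of Gilmore's theorem, which the paper only cites. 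What your approach buys is explicitness: certificates (a chordless cycle; a bad triple of hyperedges, shrunk to a minimal bad clique) are extracted directly, with a concrete $O(|E|^3)$ scan, rather than through repeated oracle calls on induced subhypergraphs. What the paper's approach buys is brevity and the fact that only decision oracles are needed.

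Two small points to tidy up. First, your sentence that chordality and conformality are ``by definition properties of the primal graph'' is inaccurate for conformality, which relates cliques of the primal graph to the hyperedge set; your actual arguments never rely on this misstatement, since they handle hyperedge containment directly, but the sentence should be fixed. Second, your claim that singleton cliques are always covered presumes that every vertex lies in at least one hyperedge; this is an implicit standing assumption (the paper's hypergraphs come from schemas, where $V = X_1 \cup \cdots \cup X_m$), and without it statement (2) itself can fail, so it is worth stating explicitly.
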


\begin{proof}
  The proof of (1) is straightforward. For the proof of (2)
  see~\cite{DBLP:journals/csur/Brault-Baron16}. Since there exist
  polynomial-time algorithms that test whether a graph is chordal
  (see, e.g.,~\cite{RoseTarjanLueker1976}), an algorithm to find a~$W$
  as stated in (1), when~$H$ is not chordal, is to iteratively delete
  vertices whose removal leaves a hypergraph with a non-chordal primal
  graph until no more vertices can be removed. Also, since there exist
  polynomial-time algorithms that test whether a hypergraph is
  conformal (see, e.g., Gilmore's Theorem in page~31
  of~\cite{Berge1989book}), an algorithm to find a~$W$ stated in (2),
  when~$H$ is not conformal, is to iteratively delete vertices whose
  removal leaves a non-conformal hypergraph until no more vertices can
  be removed. In both cases, once the set~$W$ is found, a sequence of
  safe-deletion operations that transforms~$H$~to~$R(H[W])$ if
  obtained by first deleting all vertices in~$V\setminus W$, and then
  deleting all covered edges.
\end{proof}

Let~$A_1,\ldots,A_n$ be attributes and let~$H = (V,E)$ be a hypergraph
with vertices~$V = \{A_1,\ldots,A_n\}$ and
edges~$E = \{X_1,\ldots,X_m\}$.  A \emph{collection of bags over~$H$}
is a collection~$D$ of bags~$R_1(X_1),\ldots,R_m(X_m)$ for
some~$m \geq 1$, i.e., each~$R_i$ is a bag over the schema~$X_i$.  For
an integer~$k \in [m]$, we say that a collection~$D$ of bags over~$H$
is~\emph{$k$-wise consistent} if for every~$I \subseteq [m]$
with~$|I|\leq k$, the collection~$\{R_i : i \in I\}$ is globally
consistent.  Observe that~$D$ is pairwise consistent if and only if it
is~$2$-wise consistent; furthermore,~$D$ is globally consistent if and
only if it is~$m$-wise consistent.

\begin{lemma} \label{lem:cons-preserv}
  Let~$H_0$ and~$H_1$ be hypergraphs such that~$H_0$ is obtained
  from~$H_1$ by a sequence of safe-deletion operations. For every
  collection~$D_0$ of bags over~$H_0$, there exists a collection~$D_1$ of bags
  over~$H_1$ such that, for every integer~$k \in [m]$, it holds
  that~$D_0$ is~$k$-wise consistent if and only if~$D_1$ is~$k$-wise
  consistent.  Moreover, there is a polynomial-time algorithm that,
  given~$D_0$ and a sequence of safe-deletion operations that
  transforms~$H_1$ to~$H_0$, computes~$D_1$.
\end{lemma}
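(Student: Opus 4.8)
The plan is to reduce the statement to a single safe-deletion step and then compose. Since a sequence of safe-deletion operations transforms~$H_1$ into~$H_0$, I would process that sequence in reverse: starting from the given collection~$D_0$ over~$H_0$, I undo one operation at a time, at each stage producing a collection over the current hypergraph that preserves $k$-wise consistency for \emph{every}~$k$ at once. Because the relation ``is $k$-wise consistent for all~$k$ if and only if'' is transitive, it suffices to treat one reverse step. The polynomial-time bound then follows because each reverse step is a local modification (a marginal, or a cylindrification) computable in polynomial time, and it never enlarges the support of any bag, so the bit-size stays polynomial along the whole (polynomially long) sequence. For one reverse step let~$G$ be the smaller hypergraph and~$G^+$ the larger one obtained by undoing a single deletion; given~$D$ over~$G$ I build~$D^+$ over~$G^+$. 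There are two cases.

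\emph{Case A (reverse of a covered-edge-deletion).} Here~$G^+ = (V, E \cup \{e\})$ with~$e \subseteq f$ for some edge~$f \in E$ that survived in~$G$. I add to~$D$ the single new bag~$R_e := R_f[e]$, the marginal of~$R_f$ onto~$e$. The preservation argument rests entirely on the marginal-composition identity~$T[Z][W] = T[W]$: any witness~$T$ of a sub-collection that includes~$R_f$ automatically satisfies~$T[e] = T[f][e] = R_f[e] = R_e$, so it witnesses the sub-collection once~$R_e$ is thrown in. For a sub-collection of size~$\leq k$ that uses~$e$ but not~$f$, I swap~$e$ for~$f$ (which does not change the size), obtain from the $k$-wise consistency of~$D$ a witness over a possibly larger union, and marginalize it down to the union of the sub-collection; the same identity shows the marginal is still a witness. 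The reverse implication is immediate, since every size-$\leq k$ sub-collection of~$D$ is literally a sub-collection of~$D^+$ not mentioning~$e$.

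\emph{Case B (reverse of a vertex-deletion).} Here~$G = G^+ \setminus u$, and I reinstate~$u$ as a dummy attribute. Fixing a value~$\star \in \domain(u)$, I define the cylindrification~$R^{+u}$ of a bag~$R$ to place all of~$R$'s mass on the $u$-value~$\star$. For each edge~$X$ of~$G^+$ I set~$S_X := R_{X \setminus \{u\}}^{+u}$ when~$u \in X$ (and~$X \neq \{u\}$), and~$S_X := R_X$ when~$u \notin X$, where the~$R$'s are the bags of~$D$. The guiding principle is that a single-valued attribute carries no information, made precise by the identities~$(R^{+u})[Z] = (R[Z \setminus \{u\}])^{+u}$ for~$u \in Z$ and~$(R^{+u})[Z] = R[Z]$ for~$u \notin Z$: cylindrifying commutes with marginalizing, and marginalizing~$u$ away undoes cylindrification. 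Given these, both directions follow the Case~A template: from a witness over~$G$ I cylindrify along~$u$ to obtain a witness over~$G^+$, and from a witness over~$G^+$ I marginalize~$u$ away to obtain one over~$G$, checking the per-edge marginal conditions through the two identities.

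I expect Case~B to be the main obstacle, not because of the information-free principle itself but because of the bookkeeping it forces. When both~$Y$ and~$Y \cup \{u\}$ are edges of~$G^+$ they collapse to the single edge~$Y$ of~$G$, so~$D$ has one bag where~$D^+$ has two (mutually consistent) bags, and the correspondence of sub-collections is no longer a bijection; I handle this by fixing, for each edge of~$G$, a representative edge of~$G^+$ and mapping sub-collections accordingly, noting that this map does not increase cardinality, so the bound~$|I| \leq k$ survives. The genuinely delicate point is a singleton edge~$\{u\}$ of~$G^+$, which vanishes in~$G$: there is no bag of~$D$ to cylindrify, and I must instead supply a bag on~$\{u\}$ of the correct total mass. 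This is where I invoke that any two consistent bags have equal total mass (an instance of~$R[\emptyset] = S[\emptyset]$, itself a consequence of Lemma~\ref{lem:two-cons}), so that for~$k \geq 2$ all bags of~$D$ share a common total mass~$M$; assigning the singleton-edge bag mass~$M$ on~$\star$ makes it consistent with everything, while the case~$k = 1$ is vacuous since a single bag is always globally consistent.
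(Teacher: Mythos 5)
Your proposal is correct and follows essentially the same route as the paper's proof: reduce to a single safe-deletion step and compose, undo a covered-edge deletion by adding the marginal $R_f[e]$ of the covering bag, and undo a vertex deletion by cylindrifying each affected bag's mass onto a fixed default value, with exactly the marginal/cylindrification identities you state doing the work in both directions. If anything, you are more explicit than the paper on two boundary cases it glosses over --- the swap of $e$ for its covering edge $f$ when a sub-collection uses $e$ but not $f$, and the vanishing singleton edge $\{u\}$, where your equal-total-mass bag is a sound patch provided $M$ is fixed as the mass of some particular bag of $D$ so that $D^+$ is well defined even when $D$ is inconsistent.
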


\begin{proof}
  We first define the collection~$D_1$ in the case in which~$H_0$ is obtained from~$H_1$ by a
  single safe-deletion operation. In the case of a sequence of safe-deletion operations, the collection~$D_1$ in the statement of the lemma will be the
  result of iterating the construction in the first case~$t$ many times, where~$t$ is
  the number of operations that transforms~$H_1$ to~$H_0$. After the
  construction is spelled out, we analyse the run-time of the underlying
  algorithm and then prove its main property. In what follows,
  suppose that~$H_1 = (V_1,E_1)$, where~$V_1 = \{A_1,\ldots,A_n\}$
  and~$E_1 = \{X_1,\ldots,X_n\}$.

  Assume first that~$H_0 = H_1 \setminus X$ where~$X \in E_1$ is such
  that~$X \subseteq X_j$ for some~$j \in [m]$ with~$X \not= X_j$;
  i.e.,~$H_0$ is obtained from~$H_1$ by deleting a covered edge. In
  particular,~$V_0 = V_1$ and~$E_0 = E_1 \setminus \{X\}$. If the bags
  of~$D_0$ are~$S_i(X_i)$ for~$i \in [m]$ with~$X_i \not= X$,
  then~$D_1$ is defined as the collection with bags~$R_i(X_i)$
  for~$i \in [m]$ defined as follows: For each~$i \in [m]$,
  if~$X_i \not= X$, then~$R_i := S_i$; else let~$R_i := S_j[X]$.

  Assume next that~$H_0 = H_1 \setminus A$ where~$A \in V_1$;
  i.e.,~$H_0$ is obtained from~$H_1$ by deleting a vertex. In
  particular,~$V_0 = V_1\setminus\{A\}$ and~$E_0 = \{Y_1,\ldots,Y_m\}$
  where~$Y_i = X_i\setminus\{A\}$ for~$i = 1,\ldots,m$. Fix a default
  value~$u_0$ in the domain~$\mathrm{Dom}(A)$ of the attribute~$A$. If
  the bags of~$D_0$ are~$S_i(Y_i)$ for~$i \in [m]$, then~$D_1$ is
  defined as the collection with bags~$R_i(X_i)$ for~$i \in [m]$
  defined as follows: For each~$i \in [m]$, if~$A \not\in X_i$,
  let~$R_i := S_i$; else let~$R_i$ be the bag of
  schema~$X_i = Y_i \cup \{A\}$ defined for every~$X_i$-tuple~$t$
  by~$R_i(t) := 0$ if~$t(A) \not= u_0$ and~$R_i(t) := S_i(t[Y_i])$
  if~$t(A) = u_0$. We note that in case~$X_i = \{A\}$, the bag~$R_i$
  has empty schema~$Y_i = \emptyset$ and consists of the empty tuple
  with multiplicity~$S_i(u_0)$.

  It follows from the definitions that, in both cases, each bag~$R$
  of~$D_1$ has its  multiset cardinality bounded by~$S(\emptyset)$
  for some bag~$S$ of~$D_0$. In the case~$H_0 = H_1 \setminus X$, this
  follows from the fact that each bag of~$D_1$ is either a bag
  of~$D_0$ or the marginal of a bag of~$D_0$. In the
  case~$H_0 = H_1 \setminus A$, this follows from the fact that each
  bag of~$D_1$ is either a bag of~$D_0$ or a bag with the same
  multiset cardinality as a bag of~$D_0$. It follows by induction
  that if~$H_0$ is obtained from~$H_1$ by a sequence of~$t$ many
  safe-deletion operations, then the collection~$D_1$ of bags that results
  by  applying the construction~$t$ many times starting at~$D_0$ has
  each bag~$R$ of  multiset cardinality bounded by~$S(\emptyset)$
  for some bag~$S$ of~$D_0$. Thus,~$D_1$ has size at most~$t$ times
  the size of~$D_0$ and can be constructed in time polynomial in the
  size of~$D_0$ and the length~$t$ of the sequence.

  We prove the main property by cases. Fix an integer $k \geq 1$.

  \begin{claim}
    Assume~$H_0 = H_1 \setminus A$ for some vertex~$A \in V_1$. Then,
    the bags~$S_i(Y_i)$ of~$D_0$ are~$k$-wise consistent if and only
    if the bags~$R_i(X_i)$ of~$D_1$ are~$k$-wise consistent.
  \end{claim}

  \begin{proof}
    Fix~$I \subseteq [m]$ with~$|I| \leq k$,
    let~$X = \bigcup_{i \in I} X_i$ and~$Y = \bigcup_{i \in I}
    Y_i$. Observe that~$Y = X \setminus \{A\}$. In particular~$Y = X$
    if~$A$ is not in~$X$.

    (If): Let~$R$ be a bag over~$X$ that witnesses the consistency
    of~$\{ R_i : i \in I \}$, and let~$S := R[Y]$. We claim that~$S$
    witnesses the consistency of~$\{ S_i : i \in I \}$.  Indeed,
    $S[Y_i] = R[Y][Y_i] = R[Y_i] = R_i[Y_i] = S_i$, where the first
equality    follows from the choice of~$S$, the second equality follows
    from~$Y_i \subseteq Y$, the third equality follows from the facts
    that~$R[X_i] = R_i$ and~$Y_i \subseteq X_i$, and the fourth equality
    follows from the definition of~$R_i$.

    (Only if): Consider the two cases:~$A \not\in X$ or~$A \in X$.
    If~$A \not\in X$, then~$R_i = S_i$ for every~$i \in I$ and therefore
    the bags~$\{ R_i : i \in I\}$ are consistent because the
    bags~$\{ S_i : i \in I\}$ are consistent. If~$A \in X$, then
    let~$S$ be a bag over~$Y$ that witnesses the consistency of the
    bags~$\{ S_i : i \in I\}$, and let~$R$ be the bag over~$X$ defined
    for every~$X$-tuple~$t$ by~$R(t) := 0$ if~$t(A) \not= u_0$ and
    by~$R(t) := S(t[Y])$ if~$t(A) = u_0$. We claim that~$R$ witnesses
    the consistency of the bags~$R_i$ for~$i \in I$. We show
    that~$R_i = R[X_i]$ for~$i \in I$.  Towards this, first we argue
    that~$S[Y_i] = R[Y_i]$.  Indeed, for every~$Y_i$-tuple~$r$ we have
     \begin{align}
      S(r) = \sum_{s \in S' : \atop s[Y_i] = r} S(s) =
      \sum_{t \in \tuples(X) : \atop {t[Y_i] = r, \atop t(A) = u_0}} S(t[Y]) =
      \sum_{t \in S' : \atop t[Y_i] = r} R(t) =
      R(r), \label{eqn:lsllsggg}
    \end{align}
    where the first equality follows from~\eqref{eqn:marginal}, the
    second equality follows from the fact that the map~$t \mapsto t[Y]$ is a
    bijection between the set of~$X$-tuples~$t$ such that~$t[Y_i]=r$
    and~$t(A) = u_0$ and the set of~$Y$-tuples~$s$ such
    that~$s[Y_i]=r$, the third equality follows from the definition of~$R$, and
    the fourth equality follows from~\eqref{eqn:marginal}.

    In case~$A \not\in X_i$, we have that~$Y_i = X_i$, hence
    Equation~\eqref{eqn:lsllsggg} already shows
    that~$R_i = S_i = S[Y_i] = R[Y_i] = R[X_i]$. In case~$A \in X_i$,
    we use the fact that~$S_i = S[Y_i]$ to show that~$R_i =
    R[X_i]$. For every~$X_i$-tuple~$r$ with~$r(A) \not= u_0$, we
    have~$R_i(r) = 0$ and also~$R(r) = \sum_{t : t[X_i] = r} R(t) = 0$
    since~$t[X_i] = r$ and~$A \in X_i$
    implies~$t(A) = r(A) \not= u_0$. Thus, $R_i(r) = 0 = R(r)$ in this
    case. For every~$X_i$-tuple~$r$ with~$r(A) = u_0$, we have
    \begin{align}
      R_i(r) =  S_i(r[Y_i]) =  S(r[Y_i]) =  R(r[Y_i]), \label{lem:inter}
    \end{align}
    where the first equality follows from the definition of~$R_i$ and
    the assumption that~$r(A) = u_0$, the second equality follows from
    $S_i = S[Y_i]$, and the third equality follows from~\eqref{eqn:lsllsggg}.
    Continuing from the right-hand side of~\eqref{lem:inter}, we have
    \begin{align}
     R(r[Y_i]) =  \sum_{t \in R' : \atop t[Y_i] = r[Y_i]} R(t) =
     \sum_{t \in R' : \atop t[X_i] = r} R(t) = R(r),
    \label{lem:retni}
    \end{align}
    where the first equality follows from~\eqref{eqn:marginal}, the
    second equality follows from the assumption that~$A \in X_i$
    and~$r(A) = u_0$ together with~$R(t) = 0$ in
    case~$t(A) \not= u_0$, and the third equality follows
    from~\eqref{eqn:marginal}.  Combining~\eqref{lem:inter}
    with~\eqref{lem:retni}, we get~$R_i(r) = R(r)$ also in this
    case. This proves that~$R_i = R[X_i]$.
    \end{proof}

    \begin{claim}
      Assume~$H_0 = H_1 \setminus X$ for some edge~$X \in E_1$ that is
      covered in~$H_1$. Then, the bags~$S_i(X_i)$ of~$D_0$
      are~$k$-wise consistent if and only if the bags~$R_i(Y_i)$
      of~$D_1$ are~$k$-wise consistent.
    \end{claim}

     \begin{proof}
       Let~$l \in [m]$ be such that~$X = X_l \subseteq X_j$
       for some~$j \in [m]\setminus\{l\}$,
       so~$E_0 = \{ X_i : i \in [m]\setminus \{l\}\}$.

       (If): Fix~$I \subseteq [m]\setminus\{l\}$ with~$|I| \leq k$ and
       let~$X = \bigcup_{i \in I} X_i$. Let~$R$ be a bag over~$X$ that
       witnesses the consistency of~$\{ R_i : i \in I\}$ and
       let~$S = R$. Since~$S_i = R_i$ for
       every~$i \in [m]\setminus\{l\}$, it is obvious that~$S$
       witnesses the consistency of~$\{ S_i : i \in I \}$.

       (Only if): Fix~$I \subseteq [m]$ with~$|I| \leq k$ and
       let~$X = \bigcup_{i \in I} X_i$. Let~$S$ be a bag over~$X$ that
       witnesses the consistency
       of~$\{ S_i : i \in I\setminus\{l\} \}$ and let~$R = S$. We
       have~$R_l = S_j[X_l] = S[X_j][X_l] = R[X_j][X_l] = R[X_l]$
       where the first equality follows from the definition of~$R_l$, the
       second equality follows from the fact that~$S_j = S[X_j]$, the third
     equality  follows from the choice of~$R$, and the fourth equality follows
       from~$X_l \subseteq X_j$.
     \end{proof}

     The proof of  Lemma \ref{lem:cons-preserv} is now complete.
\end{proof}

Lemma \ref{lem:cons-preserv} implies  that the local-to-global
consistency property for bags is preserved under induced
hypergraphs and under reductions.

\begin{restatable}{corollary}{preservelemma} \label{lem:preserv1}
 If a hypergraph $H$ has the \ltgc~for bags, then for every subset~$W$ of the set of
  vertices of~$H$, the hypergraph~$R(H[W])$ also has the
  \ltgc~for~bags.
\end{restatable}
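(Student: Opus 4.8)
The plan is to realize $R(H[W])$ as the endpoint of a sequence of safe-deletion operations applied to $H$, and then to transport consistency back and forth through Lemma~\ref{lem:cons-preserv}, using the hypothesis that $H$ has the \ltgc~for bags only once, at the level of $H$ itself. First I would record that $R(H[W])$ is indeed obtained from $H$ by safe-deletions: deleting, one at a time, every vertex in $V \setminus W$ (each step a vertex-deletion) produces the induced hypergraph $H[W]$, and then deleting, one at a time, every non-maximal hyperedge of $H[W]$ (each step a covered-edge-deletion, since a non-maximal edge is strictly contained in a maximal one that is never deleted) produces the reduction $R(H[W])$. This is precisely the kind of sequence described at the end of the proof of Lemma~\ref{lem:characconf}, and the construction works for an arbitrary $W$, so nothing new is needed here.

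Now fix an arbitrary pairwise consistent collection $D_0$ of bags over $R(H[W])$; the goal is to show $D_0$ is globally consistent. Applying Lemma~\ref{lem:cons-preserv} to the sequence above, with $H_0 = R(H[W])$ and $H_1 = H$, yields a collection $D_1$ of bags over $H$ such that, for every $k$, $D_0$ is $k$-wise consistent if and only if $D_1$ is $k$-wise consistent. Taking $k = 2$, the pairwise consistency of $D_0$ gives the pairwise consistency of $D_1$. Since $D_1$ is a collection of bags over $H$ and $H$ has the \ltgc~for bags, $D_1$ is globally consistent.

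It remains to push global consistency back from $D_1$ to $D_0$, and this is the one point that requires a little care, because $D_0$ and $D_1$ need not have the same number of bags. A vertex-deletion can only merge or eliminate hyperedges and a covered-edge-deletion strictly removes one, so $R(H[W])$ has $m_0 \leq m_1$ hyperedges, where $m_0$ and $m_1$ are the numbers of hyperedges of $R(H[W])$ and of $H$, respectively; equivalently $\lvert D_0\rvert = m_0 \leq m_1 = \lvert D_1\rvert$. By definition, global consistency of $D_1$ is the same as $m_1$-wise consistency, and since $m_0 \leq m_1$ this immediately implies $D_1$ is $m_0$-wise consistent (every subfamily of size at most $m_0$ has size at most $m_1$). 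Applying the equivalence of Lemma~\ref{lem:cons-preserv} at $k = m_0$ then gives that $D_0$ is $m_0$-wise consistent, i.e., globally consistent, which is exactly what we want. The only subtlety in the whole argument is thus to invoke the transfer lemma at the correct value of $k$ on each side; matching $k=2$ to get pairwise consistency upward and $k = m_0$ to get global consistency downward is what makes the mismatch in the number of bags harmless.
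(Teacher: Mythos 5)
Your proof is correct and matches the paper's (largely implicit) argument: the paper derives this corollary directly from Lemma~\ref{lem:cons-preserv}, using exactly the safe-deletion sequence from the end of the proof of Lemma~\ref{lem:characconf} (delete the vertices of $V\setminus W$, then the covered edges). Your careful bookkeeping of which value of $k$ to use on each side of the transfer ($k=2$ upward, $k=m_0$ downward, with $m_0\leq m_1$) is a detail the paper leaves unstated, and you handle it correctly.
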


We are now ready to give the proof of Theorem \ref{thm:BFMY-bags}.

\begin{proof} [Proof of Theorem \ref{thm:BFMY-bags}]
  Let~$H$ be a hypergraph.  By Theorem \ref{thm:BFMY}, statements (a),
  (b), (c), and (d) are equivalent, because these statements express
  ``structural" properties of hypergraphs, i.e., their definitions
  involve only the vertices and the hyperedges of the hypergraph at
  hand.  So, we only have to show that statement (e), which involves
  ``semantics" notions about bags, is equivalent to (one of) the other
  three statements.  This will be achieved in two steps. First, we
  will show that if~$H$ has the running intersection property,
  then~$H$ has the \ltgc~for bags. Second, we will show that if~$H$ is
  not conformal or~$H$ is not chordal, then~$H$ does not have the
  \ltgc~for bags.

  \paragraph{Step 1.} Assume that the hypergraph~$H$ has the running
  intersection property. Hence, there is a listing~$X_1,\ldots,X_m$ of
  its hyperedges such that for every~$i \in [m]$ with~$i \geq 2$,
  there is a~$j \in [i-1]$ such
  that~$X_i \cap (X_1 \cup \cdots \cup X_{i-1}) \subseteq
  X_j$. Let~$R_1(X_1),\ldots,R_m(X_m)$ be a collection of pairwise
  consistent bags over the schemas~$X_1,\ldots,X_m$.  By induction
  on~$i = 1,\ldots,m$, we show that there is a bag~$T_i$
  over~$X_1 \cup \cdots \cup X_i$ that witnesses the global
  consistency of the bags~$R_1,\ldots,R_i$.  For~$i = 1$ the claim is
  obvious since~$T_1 = R_1$. Assume then that~$i \geq 2$ and that the
  claim is true for all smaller indices.
  Let~$X := X_1 \cup \cdots \cup X_{i-1}$ and, by the running
  intersection property, let~$j \in [i-1]$ be such
  that~$X_i \cap X \subseteq X_j$.  By induction hypothesis, there is
  a bag~$T_{i-1}$ over~$X$ that witnesses the global consistency
  of~$R_1,\ldots,R_{i-1}$.  First, we show that~$T_{i-1}$ and~$R_i$
  are consistent. By Lemma~\ref{lem:two-cons}, it suffices to show
  that~$T_{i-1}[X \cap X_i] = R_i[X \cap X_i]$. Let~$Z = X \cap X_i$,
  so~$Z \subseteq X_j$ by the choice of~$j$, and
  indeed~$Z = X_j \cap X_i$.  Since~$j \leq i-1$, we
  have~$R_j = T_{i-1}[X_j]$. Since~$Z\subseteq X_j$, we
  have~$R_j[Z] = T_{i-1}[X_j][Z] = T_{i-1}[Z]$. By assumption,
  also~$R_j$ and~$R_i$ are consistent, and~$Z = X_j \cap X_i$, which
  by Lemma~\ref{lem:two-cons} implies~$R_j[Z] = R_i[Z]$. By
  transitivity, we get~$T_{i-1}[Z] = R_i[Z]$, hence,
  by~Lemma~\ref{lem:two-cons}, the bags~$T_{i-1}$ and~$R_i$ are
  consistent.  Let~$T_i$ be a bag that witnesses the consistency of
  the bags~$T_{i-1}$ and~$R_i$.  We show that~$T_i$ witnesses the
  global consistency of~$R_1,\ldots,R_i$. Since~$T_{i-1}$ and~$R_i$
  are consistent, first note that~$T_{i-1} = T_i[X]$
  and~$R_i = T_i[X_i]$ by Lemma~\ref{lem:two-cons}. Now
  fix~$k \leq i-1$ and note that
  \begin{equation}
  R_k = T_{i-1}[X_k] = T_i[X][X_k] = T_i[X_k],
\end{equation}
where the first equality follows from the fact that~$T_{i-1}$ witnesses the
consistency of~$R_1,\ldots,R_{i-1}$ and~$k \leq i-1$, and  the other two equalities
follow from~$T_{i-1} = T_i[X]$ and the fact that~$X_k\subseteq X$.
Thus,~$T_i$ witnesses the consistency of~$R_1,\ldots,R_i$, which was
to be shown.

\paragraph{Step 2.}
Assume that the hypergraph~$H$ is not conformal or it is not chordal. By Lemma \ref{lem:characconf}, there is a subset~$W$ of~$V$ such that~$|W| \geq 3$
  and~$R(H[W])=(W,\{W\setminus\{A\} : A \in W\})$ or there is a subset~$W$ of~$V$ such that~$|W| \geq 4$ and~$R(H[W]) = (W,\{\{A_i,A_{i+1}\} : i \in [n]\})$, where~$A_1,\ldots,A_n$
  is an enumeration of~$W$ and~$A_{n+1} := A_1$.
  By Corollary \ref{lem:preserv1}, if~$H$ has the \ltgc~for bags, then
  for every subset~$W$ of~$V$, the hypergraph~$R(H[W])$ also has the
  \ltgc~for bags.  It follows that, to show that~$H$ does not have the
  \ltgc~for bags, it suffices to show that no hypergraph of the
  form~$(W,\{W\setminus\{A\} : A \in W\})$ with~$|W|\geq 3$ has the
  \ltgc~for bags, and no hypergraph of the
  form~$(W,\{\{A_i,A_{i+1}\} : i \in [n]\})$,
  where~$|W|\geq 4$,~$A_1,\ldots,A_n$ is an enumeration of~$W$,
  and~$A_{n+1} := A_1$ has the \ltgc~for bags.

  The preceding ``minimal" non-conformal and non-chordal hypergraphs
  share the following properties: 1)~all their hyperedges have the
  same number of vertices, and 2)~all their vertices appear in the
  same number of hyperedges. For hypergraphs~$H^*$ that have these
  properties, we construct a collection~$C(H^*)$ of~bags that are
  indexed by the hyperedges of~$H^*$

Let~$H^* = (V^*,E^*)$ be a hypergraph and let~$d$ and~$k$ be positive
integers. The hypergraph~$H^*$ is called~\emph{$k$-uniform} if every
hyperedge of~$H^*$ has exactly~$k$ vertices. It is called
\emph{$d$-regular} if any vertex of~$H^*$ appears in exactly~$d$
hyperedges of~$H$.
Thus, the ``minimal"   non-conformal hypergraph in
Lemma~\ref{lem:characconf}
is~$k$-uniform and~$d$-regular
for~$k := d := |W|-1$. Likewise, the ``minimal" non-chordal hypergraph in the same lemma
is~$k$-uniform and~$d$-regular for~$k := d := 2$.  For
each~$k$-uniform and~$d$-regular hypergraph~$H^*$ with~$d \geq 2$ and
with hyperedges~$E^* = \{X_1,\ldots,X_m\}$, we construct a
collection~$C(H^*) := \{ R_1(X_1),\ldots,R_m(X_m) \}$ of bags,
where~$R_i$ is a bag with~$X_i$ as its set of attributes. The
collection~$C(H^*)$ of these bags will turn out to be pairwise
consistent but not globally consistent.

For each~$i \in [m]$ with~$i \not= m$, let~$R_i$ be the unique bag
over~$X_i$ defined as follows: (a) the support~$R_i'$ of~$R_i$
consists of all tuples~$t : X_i \rightarrow \{0,\ldots,d-1\}$ whose
total sum~$\sum_{C \in X_i} t(C)$ is congruent to~$0$ mod~$d$;
(b)~$R_i(t) := 1$ for each such~$X_i$-tuple, and~$R_i(t) := 0$ for
every other~$X_i$-tuple.  For~$i = m$, let~$R_m$ be the unique bag
over~$X_m$ defined as follows: (a) the support~$R_m'$ of~$R_m$
consists of all tuples~$t : X_m \rightarrow \{0,\ldots,d-1\}$ whose
total sum~$\sum_{C \in X_m} t(C)$ is congruent to~$1$ mod~$d$;
(b)~$R_m(t) :=1$ for each such~$X_m$-tuple, and~$R_m(t) := 0$ for
every other~$X_m$-tuple.

By Lemma~\ref{lem:two-cons}, to show that the bags~$R_1,\ldots,R_m$
are pairwise consistent, it suffices to show that for every two
distinct~$i,j \in [m]$, we have~$R_i[Z] \equiv R_j[Z]$,
where~$Z := X_i \cap X_j$. In turn, this follows from the claim that
for every~$Z$-tuple~$t : Z \rightarrow \{0,\ldots,d-1\}$, we
have~$R_i(t) = R_j(t) = d^{k-|Z|-1}$. Indeed, since by~$k$-uniformity
every hyperedge of~$H$ has exactly~$k$ vertices, for
every~$u \in \{0,\ldots,d-1\}$, there are exactly~$d^{k-|Z|-1}$
many~$X_i$-tuples~$t_{i,u,1},\ldots,t_{i,u,d^{k-|Z|-1}}$ that
extend~$t$ and have total sum congruent to~$u$ mod~$d$. It follows
then that~$R_i[Z] = R_j[Z]$ regardless of whether~$n \in \{i,j\}$
or~$n \not\in \{i,j\}$, and hence any two~$R_i$ and~$R_j$ are
consistent by Lemma~\ref{lem:two-cons}.  To argue that the
relations~$R_1,\ldots,R_m$ are not globally consistent, we proceed by
contradiction. If~$R$ were a bag that witnesses their consistency,
then it would be non-empty and its support would contain a tuple~$t$
such that the projections~$t[X_i]$ belong to the supports~$R'_i$ of
the~$R_i$, for each~$i \in [m]$. In turn this means that
  \begin{align}
    & \textstyle{\sum_{C \in X_i} t(C) \;\equiv\; 0 \text{ mod } d}, \;\;\;\;\;\
 \text{ for $i \not = m$ } \label{eqn:restlu} \\
    & \textstyle{\sum_{C \in X_i} t(C) \;\equiv\; 1 \text{ mod } d}, \;\;\;\;\;\
 \text{ for $i = m$. } \label{eqn:a0lu}
  \end{align}
  Since by~$d$-regularity each~$C \in V$ belongs to exactly~$d$ many
  sets~$X_i$, adding up all the equations in~\eqref{eqn:restlu}
  and~\eqref{eqn:a0lu} gives
\begin{equation}
    \textstyle{\sum_{C \in V} dt(C) \;\equiv\; 1 \text{ mod } d},
  \end{equation}
  which is absurd since the left-hand side is congruent to~$0$
  mod~$d$, the right-hand side is congruent to~$1$ mod~$d$,
  and~$d \geq 2$ by assumption. This completes the proof of
  Theorem~\ref{thm:BFMY-bags}.
  \end{proof}

  Note that Beeri et al.\ \cite{BeeriFaginMaierYannakakis1983} showed
  that hypergraph acyclicity is equivalent to several other
  ``structural" properties of hypergraphs, such as
  Graham's algorithm succeeding on~$H$. We chose not to mention these other
  ``structural" properties here because we made no use of them in the
  proof of Theorem \ref{thm:BFMY-bags}; these properties, of course,
  can be added to the list of equivalent statements in Theorem
  \ref{thm:BFMY-bags}.  However, Beeri et al.\
  \cite{BeeriFaginMaierYannakakis1983} showed that hypergraph
  acyclicity is also equivalent to several ``semantic" properties of
  relations other than the \ltgc~for relations, including the
  existence of a full reducer for relations. As we shall discuss in
  Section \ref{sec:conclusions}, it remains an open problem to
  formulate a suitable concept of a full reducer for bags and show
  that the existence of such a full reducer for bags is equivalent to
  hypergraph acyclicity and, hence to the \ltgc~property for bags.
  The main technical obstacle is that the bag-join of a globally
  consistent collection of bags need not witness their global
  consistency.

  It should also be pointed out that the proof of
  Theorem~\ref{thm:BFMY} in \cite{BeeriFaginMaierYannakakis1983} has a
  different architecture than the proof of our
  Theorem~\ref{thm:BFMY-bags}. In particular, in proving the
  equivalence between the \ltgc~for relations and acyclicity, they
  make use of Graham's algorithm.

\section{Complexity of Bag Consistency} \label{sec:complexity}

In this section, we explore the algorithmic aspects of global
consistency.  We first discuss known results about global consistency
for relations.

\subsection{The Set Case}

The \emph{\gcpr}~asks: given a hypergraph~$H=(V,\{X_1,\ldots,X_m\})$
and relations~$R_1,\ldots,R_m$ over~$H$, are the
relations~$R_1,\ldots,R_m$ globally consistent?  This problem is also
known as the \emph{universal relation problem} since a relation~$W$
witnessing the global consistency of~$R_1,\ldots,R_m$ is called a
\emph{universal} relation for~$R_1,\ldots,R_m$. Honeyman, Ladner, and
Yannakakis \cite{DBLP:journals/ipl/HoneymanLY80} showed that the
\gcpr~is NP-complete. The proof of NP-hardness is a reduction from
{\sc 3-Colorability} in which each relation is binary and consists of
just six pairs. The proof of membership in NP uses the observation
that if a collection~$R_1,\ldots,R_m$ of relations is globally
consistent,
then a witness~$W$~of this fact can be obtained as follows: for
each~$i\leq m$ and each tuple~$t\in R_i$, pick a tuple in the
join~$R_1 \Join \cdots \Join R_m$ that extends~$t$ and insert it
in~$W$.  In particular, the cardinality~$|W|$ of~$W$ is bounded by the
sum~$\sum_{i=1}^m |R_i| \leq m \max\{|R_i|:i\in [m]\}$, and thus the
size of~$W$ is bounded by a polynomial in the size of the input
hypergraph~$H$ and the input relations~$R_1,\ldots,R_m$.

The main result in Beeri et al.\ \cite{BeeriFaginMaierYannakakis1983}
(stated here as Theorem \ref{thm:BFMY}) implies that the \gcpr~is
solvable in polynomial time when restricted to acyclic hypergraphs,
since, in this case, global consistency of relations is equivalent to
pairwise consistency of relations. Furthermore, for every fixed
hypergraph~$H=(V,\{X_1,\ldots,X_m\})$ (be it cyclic or acyclic), the
\gcpr~restricted to relations~$R_1,\ldots,R_m$ of
schemas~$X_1,\ldots,X_m$ is also solvable in polynomial time, since
one can first compute the join~$J = R_1\Join \cdots \Join R_m$ in
polynomial time and then check whether~$J[X_i]= R_i$ holds,
for~$i = 1,\ldots,m$. While the cardinality~$|J|$ of this witness~$J$
can only be bounded
by~$\prod_{i=1}^m |R_i| \leq \max\{|R_i|:i\in[m]\}^m$, this
cardinality is still polynomial in the size of the input because, in this
case, the exponent~$m$ is fixed and not part of the input.

\subsection{Decision Problem for Bags}

We now consider the \emph{\gcpb}, which asks: given a
hypergraph~$H=(V,\{X_1,\ldots,X_m\})$ and bags~$R_1,\ldots,R_m$
over~$H$, are the bags~$R_1,\ldots,R_m$ globally consistent?
We  also consider a family of decision problems arising from fixed hypergraphs.  Specifically, with every
fixed hypergraph~$H=(V,\{X_1,\ldots,X_m\})$, we associate  the decision
problem \glcpb$(H)$, which asks: given bags~$R_1,\ldots,R_m$ over~$H$,
are the bags~$R_1,\ldots,R_m$ globally consistent?

The first result
we obtain about the \gcpb\ is that it is in~NP, even if the
multiplicities of the tuples in the bags are represented in binary. To
prove this, we will show that if~$R_1,\ldots,R_m$ are globally
consistent bags, then there exists a bag~$W$ that witnesses their
global consistency and has size polynomial in the size
of~$R_1,\ldots,R_m$. More precisely, we will establish that the
support~$W'$ of the bag~$W$ has cardinality at
most~$\sum_{i=1}^m \sum_{r \in R'_i} \log(R_i(r)+1)$, and each
tuple~$t \in W'$ has multiplicity~$W(t)$ bounded
by~$\max\{ R_i(r) : i \in [m], r \in R'_i \}$. In order to establish
this, we need an integral version of Carath\'eodory's Theorem due to
Eisenbrand and Shmonin \cite{EisenbrandShmonin2006}. For a finite
set~$X \subseteq \mathbb{R}^d$ of real vectors,
let~$\mathrm{intcone}(X)$ denote the \emph{integer conic hull} of~$X$,
that is, the set of all vectors of the
form~$c_1 x_1 + \cdots + c_t x_t$, where~$c_1,\ldots,c_t$ are
non-negative integers and~$x_1,\ldots,x_t$ are vectors in~$X$.

\begin{lemma}[Lemma~3 in
  \cite{EisenbrandShmonin2006}] \label{lem:ipcaratheodory}
  Let~$X \subseteq \mathbb{Z}^d_{\geq 0}$ be a finite set of
  non-negative integer vectors and let~$b = (b_1,\ldots,b_d)$ be a
  vector in its integer conic hull~$\mathrm{intcone}(X)$.
  If~$|X| > \sum_{i=1}^d \log(b_i + 1)$, then there exists a proper
  subset~$X_0 \subseteq X$ such that~$b$ is in the integer conic
  hull~$\mathrm{intcone}(X_0)$ of~$X_0$.
\end{lemma}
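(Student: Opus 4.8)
The plan is to prove the statement in a constructive form: starting from any representation of $b$ as a nonnegative integer combination of the vectors in $X$, I will exhibit a single vector whose coefficient can be driven to zero, which places $b$ in the integer conic hull of a proper subset. First I would fix a representation $b = \sum_{x \in X} c_x x$ with each $c_x \in \mathbb{Z}_{\geq 0}$, which exists since $b \in \intcone(X)$. If some $c_x = 0$ already, then $X_0 = X \setminus \{x\}$ works immediately, so I may assume $c_x \geq 1$ for every $x \in X$. I may likewise assume $0 \notin X$, since the zero vector contributes nothing and could be dropped at once, giving a proper subset.

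The heart of the argument is a pigeonhole estimate on subset sums. For each subset $S \subseteq X$, set $v(S) := \sum_{x \in S} x$. Because every $c_x \geq 1$ and every $x$ is nonnegative, each $v(S)$ satisfies $v(S) \leq \sum_{x \in X} c_x x = b$ componentwise, so $v(S)$ lands in the integer box $\prod_{i=1}^d \{0,1,\ldots,b_i\}$, which has exactly $\prod_{i=1}^d (b_i+1)$ points. The hypothesis $|X| > \sum_{i=1}^d \log(b_i+1)$ (with $\log$ to base two) is precisely the statement that the number $2^{|X|}$ of subsets of $X$ strictly exceeds the number of points of this box. Hence two distinct subsets $S \neq T$ must have the same sum, $v(S) = v(T)$.

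Finally I would convert this collision into the elimination of a vector by an exchange. Passing to the symmetric difference, put $A := S \setminus T$ and $B := T \setminus S$; these are disjoint, satisfy $v(A) = v(B)$, and are not both empty. Since $0 \notin X$ and all coordinates are nonnegative, a nonempty side has nonzero sum, so both $A$ and $B$ are nonempty; take $x_0 \in A$ minimizing the coefficient over $A$ and set $k := \min_{x \in A} c_x \geq 1$. Defining $c'_x := c_x - k$ for $x \in A$, $c'_x := c_x + k$ for $x \in B$, and $c'_x := c_x$ otherwise keeps all coefficients nonnegative integers, zeroes out $c'_{x_0}$, and preserves the sum, since $\sum_x c'_x x = b + k\,(v(B) - v(A)) = b$. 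Thus $b \in \intcone(X \setminus \{x_0\})$, as required.

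The one step that needs care, and on which the whole argument hinges, is this exchange: the pigeonhole collision yields only two sub-collections with equal sum, and the insight that makes it useful is to subtract exactly $k = \min_{x \in A} c_x$ copies of $A$ while adding $k$ copies of $B$, which simultaneously respects nonnegativity and forces one coefficient to vanish. The other point I would double-check is the log-versus-exponential bookkeeping, namely counting the $2^{|X|}$ subsets against the $\prod_{i=1}^d (b_i+1)$ box points; this count requires the coefficients to be at least one so that every subset sum stays in the box, which is exactly why the reduction to the all-positive case at the start is needed.
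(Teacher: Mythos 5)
Your proof is correct, but note that the paper does not prove this lemma at all: it is imported verbatim as Lemma~3 of Eisenbrand and Shmonin \cite{EisenbrandShmonin2006}, and your argument --- reduce to all coefficients positive, map the $2^{|X|}$ subset sums into the box $\prod_{i=1}^d \{0,\ldots,b_i\}$, invoke pigeonhole, and then eliminate one generator by the exchange with $k=\min_{x\in A} c_x$ --- is exactly the original proof from that reference. So this counts as correct and the same approach as the (cited) source, with all the delicate points (the need for $c_x\geq 1$ so that subset sums stay in the box, and nonnegativity ensuring both sides of the collision are nonempty) handled properly.
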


The plan is to apply Lemma~\ref{lem:ipcaratheodory} on the set~$X$ of
column vectors of the constraint-matrix~$A$ of an integer linear
program along the lines of that in~\eqref{eqn:lpfortwo}, but
generalized to any number of bags. Precisely, with each
collection~$R_1(X_1),\ldots,R_m(X_m)$ of bags, we associate a linear program,
denoted by~$P(R_1,\ldots,R_m)$, that is a direct generalization
of the linear program in~\eqref{eqn:lpfortwo}. Let~$J = R'_1 \Join \cdots \Join R'_m$ be the
join of the supports of~$R_1,\ldots,R_m$. For each~$t \in J$, the linear program~$P(R_1,\ldots,R_m)$
has a variable~$x_t$. For each~$t \in J$, each~$i \in [m]$, and
each~$r \in R'_i$, define~$a_{r,t} = 1$ if~$t[X_i] = r$
and~$a_{r,t} = 0$ if~$t[X_i] \not= r$. Then, the constraints
of~$P(R_1,\ldots,R_m)$ are
\begin{equation}
\begin{array}{lll}
\sum_{t \in J} a_{r,t} x_t = R_i(r) & & \text{ for $i \in [m]$, $r \in R'_i$, }\\
x_t \geq 0 & & \text{ for $t \in J$.}
\end{array} \label{eqn:lpformany}
\end{equation}
Writing the equations of~$P(R_1,\ldots,R_m)$ in matrix form as~$Ax=b$,
it is important to note that, unless~$m = 2$, the matrix~$A$ is no
longer the vertex-edge incidence matrix of a bipartite graph as it was
when~$m = 2$. This means that the matrix~$A$ is no longer necessarily
totally unimodular. This point notwithstanding, the fact that the
integral solutions of~$P(R_1,\ldots,R_m)$ are still in 1-to-1
correspondence with the bags that witness the global consistency
of~$R_1,\ldots,R_m$ is all we need. We elaborate on this in the
next result. Before stating the result, we need the following additional concepts.

Let~$R_1,\ldots,R_m$ be globally consistent bags. If~$W$ is a bag that
witnesses the global consistency of~$R_1,\ldots,R_m$, then we say
that~$W$ is a \emph{minimal witness}
if there is no other bag~$U$ that witnesses the global consistency
of~$R_1,\ldots,R_m$ and is such that the support~$U'$ of~$U$ is
strictly contained in the support~$W'$ of~$W$. For a bag~$R$, define
\begin{itemize} \itemsep=0pt
\item its \emph{support size} by $\suppnorm{R} := |R'|$;
\item its \emph{multiplicity bound}
  by~$\munorm{R} := \max\{ R(r) : r \in R' \}$;
\item its \emph{multiplicity size}
  by~$\mbnorm{R} := \max\{ \log(R(r)+1) : r \in R' \}$;
\item its \emph{unary size} by~$\unorm{R} := \sum_{r \in R'} R(r)$;
\item its \emph{binary size}
by~$\bnorm{R} := \sum_{r \in R'} \log(R(r)+1)$.
\end{itemize}
Clearly, for every bag~$R$, the
inequalities~$\unorm{R} \leq \suppnorm{R} \munorm{R}$
and~$\bnorm{R} \leq \suppnorm{R} \mbnorm{R}$ hold.

\begin{theorem} \label{thm:bound}
  Let~$R_1,\ldots,R_m$ be globally consistent bags and let $W$ be a
  bag that witnesses their global consistency. Then the following
  statements are true.
\begin{enumerate}  \itemsep=0pt
\item $\munorm{W} \leq \max\{\munorm{R_i} : i \in [m] \}$.
\item $\suppnorm{W} \leq \sum_{i=1}^m \unorm{R_i}$.
\item If $W$ is a minimal witness, then $\suppnorm{W} \leq \sum_{i=1}^m \bnorm{R_i}$.
\end{enumerate}
\end{theorem}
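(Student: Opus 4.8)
The plan is to prove all three bounds through the linear program $P(R_1,\ldots,R_m)$ of \eqref{eqn:lpformany} and its one-to-one correspondence with witnesses. First I would record, exactly as in Lemma~\ref{lem:inclusion}, that any witness $W$ satisfies $W' \subseteq J := R_1' \Join \cdots \Join R_m'$: if $t \in W'$ then $W(t) \geq 1$ forces $R_i(t[X_i]) \geq W(t) \geq 1$ for every $i \in [m]$, so $t[X_i] \in R_i'$. Hence $W$ is completely described by the integral vector $(W(t))_{t \in J}$, and $W \mapsto (W(t))_{t \in J}$ is a bijection between witnesses and non-negative integral solutions of $P(R_1,\ldots,R_m)$. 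Writing the system as $Ax = b$, the column indexed by $t \in J$ is the $0$/$1$ vector $a_t = (a_{r,t})_{i \in [m],\, r \in R_i'}$ recording the tuple $(t[X_1],\ldots,t[X_m])$; since two distinct tuples of $J$ differ on some attribute, and that attribute lies in some $X_i$, the columns $a_t$ are pairwise distinct.

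Parts (1) and (2) then follow from two elementary facts about marginals. The inequality $R_i(t[X_i]) = \sum_{r \in W' :\, r[X_i] = t[X_i]} W(r) \geq W(t)$ holds for every $t \in W'$ and every $i$, because $t$ is itself one of the summands. For (1), choosing $t \in W'$ with $W(t) = \munorm{W}$ gives $\munorm{R_i} \geq R_i(t[X_i]) \geq \munorm{W}$ for every $i$, which is even stronger than the stated bound. For (2), summing the identity $R_i = W[X_i]$ over all tuples shows that marginalization preserves total mass, $\unorm{R_i} = \unorm{W}$ for every $i$; combined with $\suppnorm{W} = |W'| \leq \sum_{t \in W'} W(t) = \unorm{W}$, this yields $\suppnorm{W} \leq \unorm{R_1} \leq \sum_{i=1}^m \unorm{R_i}$.

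Part (3) is where the real work lies, and here I would invoke the integral Carath\'eodory bound of Lemma~\ref{lem:ipcaratheodory}. Put $X := \{ a_t : t \in W' \}$. Since $W$ is a witness, $b = \sum_{t \in W'} W(t)\, a_t$ realizes $b$ as a member of $\intcone(X)$, and the coordinates of $b$ are precisely the multiplicities $R_i(r)$ for $i \in [m]$ and $r \in R_i'$. Consequently the threshold appearing in Lemma~\ref{lem:ipcaratheodory} equals $\sum_{i \in [m]} \sum_{r \in R_i'} \log(R_i(r)+1) = \sum_{i=1}^m \bnorm{R_i}$. Suppose, toward a contradiction, that $\suppnorm{W} > \sum_{i=1}^m \bnorm{R_i}$. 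Because distinct tuples give distinct columns, $|X| = |W'| = \suppnorm{W}$, so the hypothesis of Lemma~\ref{lem:ipcaratheodory} is met and it produces a proper subset $X_0 \subseteq X$ with $X_0 \neq X$ and $b \in \intcone(X_0)$, that is, non-negative integers with $b = \sum_{t :\, a_t \in X_0} c_t\, a_t$. This integer conic combination is again a non-negative integral solution of $P(R_1,\ldots,R_m)$, hence corresponds to a witness $U$ whose support satisfies $U' \subseteq \{ t \in W' : a_t \in X_0 \}$. Using once more that $t \mapsto a_t$ is injective on $W'$, this last set is a proper subset of $W'$, so $U' \subseteq W'$ and $U' \neq W'$, contradicting the minimality of $W$. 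Therefore $\suppnorm{W} \leq \sum_{i=1}^m \bnorm{R_i}$.

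I expect the only genuinely non-routine step to be part (3): parts (1) and (2) are immediate consequences of the mass-preservation of marginals, whereas (3) hinges on recognizing a minimal witness as an inclusion-minimal integer conic representation of $b$ by the columns of $A$, and then importing the sparsity bound of Lemma~\ref{lem:ipcaratheodory}. The two points that must be handled with care are that distinct tuples of $J$ yield distinct columns, so that $|X| = \suppnorm{W}$, and that the coordinates of $b$ reassemble exactly into $\sum_{i=1}^m \bnorm{R_i}$; once these are in place, minimality converts the conic-hull reduction directly into the desired contradiction.
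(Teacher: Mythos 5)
Your proposal is correct and follows essentially the same route as the paper's proof: parts (1) and (2) from the marginal identities $R_i = W[X_i]$ and integrality, and part (3) by viewing the minimal witness as an integer conic representation $b = \sum_{t \in W'} W(t)\, a_t$ over the (pairwise distinct) columns of the constraint matrix of $P(R_1,\ldots,R_m)$ and applying Lemma~\ref{lem:ipcaratheodory} against minimality. The only difference is presentational: you spell out the contradiction and the injectivity of $t \mapsto a_t$ a bit more explicitly than the paper does.
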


\begin{proof}
  Let~$X_1,\ldots,X_m$ be the schemas of~$R_1,\ldots,R_m$. The first
  two statements follow from the fact that if~$W$ is a witness of the
  global consistency of~$R_1,\ldots,R_m$, then the
  equality
  \begin{equation}
  W(r) = \sum_{t \in W':\atop t[X_i]=r} W(t) = R_i(r)
  \end{equation}
   holds for
  each~$i \in [m]$ and each~$X_i$-tuple~$r$, and the
  quantities~$R_i(r)$ and~$W(t)$ with~$t \in W'$ are non-negative
  integers.

  For the third statement, assume that~$W$ is a minimal witness to the global
  consistency of~$R_1,\ldots,R_m$.
  Setting~$J := R'_1 \Join \cdots \Join R'_m$, by
  Lemma~\ref{lem:inclusion}, we have~$W' \subseteq J$. For
  each~$t \in J$, define~$x_t := W(t)$ and let~$x = (x_t : t \in
  J)$. It follows from the definitions that the vector~$x$ is an
  integer feasible solution for~$P(R_1,\ldots,R_m)$.  Write the
  equations of~$P(R_1,\ldots,R_m)$ in matrix form as~$Ax=b$, where~$A$
  is a~$d \times |J|$ matrix of zeros and ones
  where~$d := \sum_{i=1}^m |R'_i|$, and~$b \in \mathbb{Z}^d_{\geq 0}$
  is a~$d$-dimensional column vector with non-negative integer
  entries~$(R_i(r) : i \in [m], r \in R'_i)$.
  Let~$X = \{ c_t : t \in W' \}$ be the subset of the~$d$-dimensional
  column vectors of~$A$ that correspond to the non-zero components
  of~$x$. From the definition of~$P(R_1,\ldots,R_m)$ it follows that
  for every two distinct~$t,t' \in W'$, we have~$c_t \not= c_{t'}$.
  Hence~$|X|=|W'|$.

  The fact that~$Ax = b$ means that~$\sum_{t \in W'} c_t x_t = b$ and
  therefore the vector~$b$ belongs to the integer conic
  hull~$\intcone(X)$ of~$X$. Likewise, for every subset~$Q \subseteq W'$
  such that~$b$ is in the integer conic hull
  of~$X_0 := \{ c_t : t \in Q \}$, there exists a bag~$W_0$ with
  support~$Q$ that witnesses the global consistency
  of~$R_1,\ldots,R_m$. Therefore, since~$W$ is a minimal witness, it
  follows from Lemma~\ref{lem:ipcaratheodory}
  that~$|X| \leq \sum_{i=1}^m \sum_{r \in R'_i} \log(R_i(r)+1) =
  \sum_{i=1}^m \bnorm{R_i}$. Since~$|W'| = |X|$, the third statement has been
  proved.
\end{proof}

It should be noted that, assuming that all the numbers that are fed
into an algorithm are represented with the same number of bits by
adding leading zeros when necessary, the size of the representation of
a bag~$R$ when it is fed into an algorithm is~$\suppnorm{R}\munorm{R}$
when the multiplicities are represented in unary,
and~$\suppnorm{R}\mbnorm{R}$ when the multiplicities are represented
in binary. Therefore, since every globally consistent collection of
bags has a minimal witness of their global consistency,
Theorem~\ref{thm:bound} readily implies the following result.

\begin{corollary} \label{cor:inNP} The \gcpb~is in NP. \end{corollary}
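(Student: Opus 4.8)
The plan is to exhibit, for every yes-instance, a witnessing bag that serves as a polynomial-size NP-certificate, and to argue that checking a purported witness is a polynomial-time task. First I would note that, by definition, the bags $R_1,\ldots,R_m$ are globally consistent precisely when some bag $W$ over $X_1 \cup \cdots \cup X_m$ satisfies $W[X_i] = R_i$ for all $i \in [m]$; moreover, whenever a witness exists, a \emph{minimal} witness exists, because the support of any witness is a subset of the finite join $J := R_1' \Join \cdots \Join R_m'$ (the $m$-bag version of Lemma~\ref{lem:inclusion}, as already used in the proof of Theorem~\ref{thm:bound}), so one can always pass to a witness of inclusion-minimal support. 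Thus the certificate will be a minimal witness $W$, encoded as the list of the tuples in its support $W'$ together with their multiplicities.

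Next I would bound the size of this certificate using Theorem~\ref{thm:bound}. Each tuple of $W'$ lies in $J$, so it is an $(X_1\cup\cdots\cup X_m)$-tuple whose coordinates are drawn from the coordinates of the input tuples; hence each has a representation of length polynomial in the input. The number of such tuples is at most $\sum_{i=1}^m \bnorm{R_i}$ by statement~(3) of Theorem~\ref{thm:bound}, which is polynomial in the binary size of the input. Each multiplicity $W(t)$ is at most $\max\{\munorm{R_i} : i \in [m]\}$ by statement~(1), so its binary encoding has length at most $\max\{\mbnorm{R_i} : i \in [m]\}$, again polynomial in the input. Multiplying these bounds shows that the total length of the encoding of $W$ is polynomial in the binary representation of the input, even when the input multiplicities are themselves written in binary.

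Finally I would describe the verification. Given a purported witness $W$, the verifier computes each marginal $W[X_i]$ by grouping the tuples of $W'$ according to their projection onto $X_i$ and summing the multiplicities within each group, and then tests whether $W[X_i] = R_i$ as bags. Each such computation inspects every tuple of $W'$ once and performs additions of integers of polynomial bit-length, so it runs in polynomial time; the comparison against $R_i$ is likewise polynomial. I expect the only delicate point to be the verification that the certificate remains polynomial in the binary input size --- in particular, that the multiplicities do not blow up --- but this is exactly what statements~(1) and~(3) of Theorem~\ref{thm:bound} deliver, so no genuine obstacle remains.
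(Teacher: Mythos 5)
Your proposal is correct and follows essentially the same route as the paper: the paper likewise observes that every globally consistent collection has a minimal witness, and that statements (1) and (3) of Theorem~\ref{thm:bound} bound such a witness's support size and multiplicities polynomially in the binary input size, so the minimal witness serves as an NP-certificate. Your additional spelling out of the verification step and of why minimal witnesses exist is fine and matches what the paper leaves implicit.
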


It is worth noting that the first two statements of
Theorem~\ref{thm:bound} alone already imply the same if the
multiplicities of the given bags are bounded, or if they are
represented in unary. Nonetheless, as the following example shows, the
third statement of Theorem~\ref{thm:bound} is unavoidable if the
multiplicities are represented in binary, even if the schemas form
acyclic hypergraphs.

\begin{example}
  Consider
  bags~$R_1(A_1A_2), R_2(A_2A_3), \ldots, R_{n-1}(A_{n-1}A_n)$ with
  supports~$\{0,1\}^2$ and multiplicity~$2^n$ for each
  tuples in their support. Let~$J$ be the bag of schema~$A_1 \cdots A_n$,
  support~$\{0,1\}^n$, and multiplicity~$4$ for each tuple in its support. Then we
  have~$J[A_iA_{i+1}] = R_i$ for all~$i = 1,\ldots,n-1$, and~$|J'|$
  has cardinality~$2^n$, which is exponentially bigger than the
  size~$4(n-1)(n+1)$ of the input~$R_1,\ldots,R_{n-1}$, when the
  multiplicities are written in binary.
\end{example}

Theorem~\ref{thm:BFMY-bags} and Lemma \ref{lem:two-cons} imply that
the \gcpb~is solvable in polynomial time when restricted to acyclic
hypergraphs, since, in this case, global consistency of bags is
equivalent to pairwise consistency of bags, and the latter is
checkable in polynomial time.

We now turn to fixed hypergraphs, and state and prove the main
result of this section.

\begin{theorem} \label{thm:complexity}
Let $H=(V,E)$ be a hypergraph. Then the following statements are true.
\begin{enumerate} \itemsep=0pt
\item If $H$ is acyclic, then \glcpb$(H)$ is solvable in polynomial time.
\item If $H$ is cyclic, then \glcpb$(H)$ is NP-complete.
\end{enumerate}
\end{theorem}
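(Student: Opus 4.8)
The plan is to treat the two parts separately, with part~(1) being essentially a bookkeeping consequence of earlier results and part~(2) carrying the real work. For part~(1), suppose $H$ is acyclic. By Theorem~\ref{thm:BFMY-bags}, $H$ has the \ltgc~for bags, so a collection $R_1,\dots,R_m$ over $H$ is globally consistent if and only if it is pairwise consistent. By Lemma~\ref{lem:two-cons}, pairwise consistency reduces to checking, for each pair $i,j$, whether $R_i[X_i\cap X_j]=R_j[X_i\cap X_j]$; each such marginal is computed by Equation~\eqref{eqn:marginal} and each such equality test runs in polynomial time. Since $m=|E|$ is a constant for fixed $H$, this yields a polynomial-time algorithm for \glcpb$(H)$.

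For part~(2), membership in NP is immediate from Corollary~\ref{cor:inNP}, so the task is NP-hardness whenever $H$ is cyclic. I would organize the argument around a \emph{structural} reduction that transports hardness from a ``minimal obstruction'' onto $H$ itself. Since $H$ is cyclic, Theorem~\ref{thm:BFMY-bags} says it is not conformal or not chordal, and Lemma~\ref{lem:characconf} then produces a set $W\subseteq V$ with $R(H[W])\cong C_n$ (if $H$ is not chordal) or $R(H[W])\cong H_n$ (if $H$ is not conformal), together with a sequence of safe-deletion operations transforming $H$ into $R(H[W])$. Feeding this sequence into Lemma~\ref{lem:cons-preserv}, and taking $k$ equal to the number of hyperedges of $H$ (so that $k$-wise consistency is global consistency on both sides), gives a polynomial-time many-one reduction from \glcpb$(R(H[W]))$ to \glcpb$(H)$. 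Because $H$ is fixed, $n\le |V|$ is a constant, so it suffices to prove, for each fixed $n$, that \glcpb$(C_n)$ and \glcpb$(H_n)$ are NP-hard, using as base case the NP-hardness of the triangle $C_3=H_3$ from the Irving--Jerrum / De~Loera--Onn result on three-dimensional tables.

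For the cycles $C_n$ with $n\ge 4$ I would use a ``copy/path'' gadget reducing from the triangle. Given a triangle instance $P_{AB},P_{BC},P_{CA}$ on vertices $A,B,C$, first check pairwise consistency in polynomial time; if it fails, output a fixed negative instance, and otherwise let $\mu_C$ be the common $C$-marginal. Identify the domains of $B_3,\dots,B_n$ with that of $C$, place $P_{AB}$ on $\{B_1,B_2\}$, place $P_{BC}$ on $\{B_2,B_3\}$, place $P_{CA}$ on $\{B_n,B_1\}$, and put the diagonal bag $\{(v,v):\mu_C(v)\}$ on each intermediate edge $\{B_i,B_{i+1}\}$ for $3\le i\le n-1$. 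The diagonal bags force $B_3=\cdots=B_n$ on the support of any witness, so projecting a $C_n$-witness onto $\{B_1,B_2,B_3\}$ yields a triangle-witness, while spreading the $C$-value of a triangle-witness along $B_3,\dots,B_n$ yields a $C_n$-witness; hence the two instances are equisatisfiable.

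The hard part will be the non-conformal family $H_n$ for $n\ge 4$, and I expect this to be the main obstacle. The naive strategy of pinning all but three vertices to a default value fails: in $H_n$ the three ``active'' vertices always lie together inside some hyperedge, namely one that omits a pinned vertex, and that hyperedge's prescribed marginal then fully specifies a three-dimensional table, which makes the instance decidable in polynomial time and destroys the reduction. Thus any correct reduction must keep \emph{all} $n$ coordinates genuinely active, so that the prescribed $(n-1)$-margins \emph{jointly}, but no single one of them, encode a hard feasibility question. I would aim to produce such an encoding by exploiting the flexibility of the De~Loera--Onn construction, realizing a hard three-dimensional line-sum instance as the constraint imposed collectively by all $(n-1)$-margins of an $n$-way table, and then verifying both that no proper subfamily of these margins already pins down the table and that the whole transformation is polynomial-time computable and answer-preserving. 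Pinning down this encoding for $H_n$ is where I expect the bulk of the technical effort to lie.
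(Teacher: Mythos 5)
Your overall architecture coincides with the paper's: part~(1) via Theorem~\ref{thm:BFMY-bags} plus Lemma~\ref{lem:two-cons}, NP membership via Corollary~\ref{cor:inNP}, hardness transported from the minimal obstruction $R(H[W])$ to $H$ via Lemmas~\ref{lem:characconf} and~\ref{lem:cons-preserv} with $k$ taken large enough that $k$-wise consistency means global consistency, and base cases $C_n$ and $H_n$ anchored at the Irving--Jerrum triangle. Your cycle gadget is also essentially the paper's: the paper (Lemma~\ref{lem:cycles}) reduces \glcpb$(C_{n-1})$ to \glcpb$(C_n)$ inductively by inserting a single diagonal ``copy'' bag with $R_n(a,a)=R_{n-1}(a)$, whereas you insert all $n-3$ diagonal edges at once; both are correct, and your pre-check of pairwise consistency (mapping inconsistent instances to a fixed negative instance) makes $\mu_C$ well defined.

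The genuine gap is the family $H_n$ for $n\geq 4$: you give no reduction, only a (correct) explanation of why pinning coordinates to default values fails, together with the hope that a De~Loera--Onn style encoding can be made to work. This case cannot be deferred, because for a fixed cyclic $H$ the obstruction produced by Lemma~\ref{lem:characconf} may well be of the form $H_n$ (for instance, $H$ could itself be $H_4$), so without it the theorem remains unproved for infinitely many hypergraphs. The missing idea (Lemma~\ref{lem:cliques} in the paper) is an elementary inductive reduction from \glcpb$(H_{n-1})$ to \glcpb$(H_n)$ by \emph{complementation}: adjoin a new attribute $A_n$ with domain $\{1,2\}$; with $M$ the maximum multiplicity in the given bags and $D_i$ the active-domain size of $A_i$, replace each $R_i$ (for $i\leq n-1$) by the bag $S_i$ defined by $S_i(t,1)=R_i(t)$ and $S_i(t,2)=MD_i-R_i(t)$, and place on the new hyperedge $\{A_1,\ldots,A_{n-1}\}$ the constant bag $S_n\equiv M$. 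A witness $R$ of the old instance yields a witness $S$ of the new one via $S(t,1)=R(t)$ and $S(t,2)=M-R(t)$, and conversely $R(t):=S(t,1)$ is a witness of the old instance. This keeps all $n$ coordinates genuinely active---exactly the requirement you identified---but achieves it with a two-line construction rather than by re-engineering the hardness proof for three-dimensional tables; the route you sketch would require substantial new work and is not carried out, so as it stands the proof of part~(2) is incomplete at its hardest point.
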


\begin{proof}
  The first part of the theorem follows from
  Theorem~\ref{thm:BFMY-bags} and Lemma~\ref{lem:two-cons}. To prove
  the second part of the theorem, first note that membership in NP is
  a special case of Corollary~\ref{cor:inNP}.
  To prove NP-hardness, we will show that if~$H$ is a minimal
  non-chordal hypergraph or a minimal non-conformal hypergraph, then
  \glcpb$(H)$ is NP-complete.  More precisely, we will show in
  Lemmas~\ref{lem:cycles} and~\ref{lem:cliques} that both problems
  \glcpb$(C_n)$ and \glcpb$(H_n)$ are NP-complete for any~$n \geq
  3$. The desired NP-hardness will then follow from
  Lemmas~\ref{lem:characconf} and~\ref{lem:cons-preserv}.

\begin{lemma} \label{lem:cycles} For every $n\geq 3$, the
    problem~\glcpb$(C_n)$ is NP-complete.
\end{lemma}

\begin{proof}
  The problem \glcpb$(C_3)$ generalizes the problem of consistency of
  3-dimensional contingency tables (3DCT) from
  \cite{DBLP:journals/siamcomp/IrvingJ94}: given a positive
  integer~$n$ and, for each~$i,j,k \in [n]$, non-negative integer
  values~$R(i,k)$,~$C(j,k)$,~$F(i,j)$, is there
  an~$n \times n \times n$ table of non-negative integers~$X(i,j,k)$
  such that~$\sum_{q=1}^n X(i,q,k) =
  R(i,k)$,~$\sum_{q=1}^n X(q,j,k) =
  C(j,k)$,~$\sum_{q=1}^n X(i,j,q) = F(i,j)$ for
  all indices~$i,j,k \in [n]$?  To see this, let~$X,Y,Z$ be three
  attributes with domain~$[n]$, and let~$R(XZ)$,~$C(YZ)$,~$F(XY)$ be
  the three bags given by the three
  tables~$R(i,k)$,~$C(j,k)$,~$F(i,j)$. Therefore, \glcpb$(C_3)$ is
  NP-complete. For~$n \geq 4$, we show that there is a polynomial time
  reduction from \glcpb$(C_{n-1})$ to \glcpb$(C_n)$. The claim that
  \glcpb$(C_n)$ is NP-complete for every~$n \geq 3$ will follow by
  induction.

  Let~$R_1(A_1A_2),R_2(A_2A_3),\ldots,R_{n-1}(A_{n-1}A_1)$ be an
  instance of \glcpb$(C_{n-1})$. Let~$A_n$ be a new attribute with the
  same domain as~$A_1$. The reduction replaces the
  bag~$R_{n-1}(A_{n-1}A_1)$ by an identical copy~$R_{n-1}(A_{n-1}A_n)$
  of schema~$A_{n-1}A_n$, and adds one more bag~$R_n(A_nA_1)$ with
  support~$R'_n = \{(a,a) : a \in \mathrm{Dom}(A_1) \}$, and
  multiplicities defined by~$R_n(a,a) = R_{n-1}(a)$ for
  every~$(a,a) \in R'_n$, where~$R_{n-1}(a)$ denotes the multiplicity
  of~$a$ in the~$R_{n-1}[A_1]$. If~$R$ is a bag that witnesses the
  global consistency of~$R_1,\ldots,R_{n-1}$, then the
  bag~$S(A_1 \cdots A_n)$ defined, for each~$A_1 \cdots A_n$-tuple~$t$
  by~$S(t) = R(t[A_1\cdots A_{n-1}])$ whenever~$t[A_n] = t[A_{n-1}]$
  and~$S(t) = 0$ otherwise, witnesses the global consistency
  of~$R_1,\ldots,R_n$. Conversely, if~$S$ is a bag that witnesses the
  global consistency of~$R_1,\ldots,R_n$, then the
  bag~$R(A_1 \cdots A_{n-1})$ defined, for
  each~$A_1 \cdots A_{n-1}$-tuple~$t$ by~$R(t) = R(t,t[A_{n-1}])$,
  witnesses the global consistency of~$R_1,\ldots,R_{n-1}$.
\end{proof}

\begin{lemma} \label{lem:cliques}
For every $n\geq 3$, the problem~\glcpb$(H_n)$ is NP-complete.
\end{lemma}

\begin{proof}
  Since~$H_3 = C_3$, the problem \glcpb$(H_3)$ is NP-complete by the
  previous lemma. For~$n \geq 4$, we show that there is a
  polynomial time reduction from \glcpb$(H_{n-1})$ to
  \glcpb$(H_n)$. The claim that \glcpb$(H_n)$ is NP-complete for
  every~$n \geq 3$ will follow by induction.

  Let~$R_1(X_1),\ldots,R_{n-1}(X_{n-1})$ be bags,
  where~$X_i = \{A_1,\ldots,A_{n-1}\}\setminus\{A_i\}$
  for~$i \in [n-1]$. Let~$A_n$ be a new attribute with
  domain~$\{1,2\}$ and define new bags~$S_1(Y_1),\ldots,S_n(Y_n)$
  with~$Y_i = \{A_1,\ldots,A_n\}\setminus\{A_i\}$ for~$i \in [n]$ as
  follows.  For~$i \in [n-1]$, let~$D_i$ be the size of the active
  domain of the attribute~$A_i$ in the supports~$R'_1,\ldots,R'_{n-1}$
  of~$R_1,\ldots,R_{n-1}$, and let~$M$ be the maximum of all
  multiplicities in~$R_1,\ldots,R_{n-1}$.  For~$i \in [n-1]$,
  define~$S_i(t,1) = R_i(t)$ and~$S_i(t,2) = MD_i - R_i(t)$ for
  any~$X_i$-tuple~$t$.  For~$i = n$, define~$S_i(t) = M$ for
  any~$Y_i$-tuple~$t$.  We claim that this reduction works. Indeed,
  given a witness~$R$ for the global consistency
  of~$R_1,\ldots,R_{n-1}$, we can produce a witness~$S$ for the global
  consistency of~$S_1,\ldots,S_n$ by setting~$S(t,1) = R(t)$
  and~$S(t,2) = M-R(t)$ for any~$A_1\cdots
  A_{n-1}$-tuple~$t$. Conversely, given a witness~$S$ for the global
  consistency of~$S_1,\ldots,S_n$, we can produce a witness~$R$ for
  the global consistency of~$R_1,\ldots,R_{n-1}$ by
  setting~$R(t) = S(t,1)$ for any~$A_1,\ldots,A_{n-1}$-tuple~$t$.
\end{proof}
The proof of Theorem \ref{thm:complexity} is now complete.
\end{proof}

\subsection{Finding the Witness}

In this section, we address the question of producing a small witness
to global consistency, when  a witness to global consistency exists. Theorem~\ref{thm:bound} ensures
that if there is any witness at all, then a small one exists, but it does not tell us how to construct a small witness.

We start by noting that, for any fixed cyclic hypergraph~$H$, one
cannot hope to find small witnesses to the global consistency of
given bags over~$H$ in time polynomial in the size of the input,
unless~P~=~NP. Indeed, just deciding if a witness exists is
already~NP-hard by Theorem~\ref{thm:complexity}. Since checking if a
witness is valid is a problem that can be solved in polynomial time,
the problem of finding a witness can only be harder.  For acyclic
hypergraphs, however, we will see that the structural results of
Section~\ref{sec:three-bag-cons} provide a way to construct a
witness. For this, we will need a strengthening of
Corollary~\ref{cor:two-cons-poly} to the effect that not only a
witness to the consistency of two bags can be found, but even a minimal
witness can be found in strongly polynomial time. We will also need a
strengthening of Theorem~\ref{thm:bound} in the special case of two
bags.

To describe the algorithm that finds minimal witnesses, we
need to introduce some terminology. Let~$R(X)$ and~$S(Y)$ be two bags
and consider the network~$N(R,S)$. In what follows, an edge~$(u,v)$
of~$N(R,S)$ of the form~$(t[X],t[Y])$ with~$t \in R' \Join S'$ is
called a \emph{middle edge}. The proof of Lemma~\ref{lem:two-cons}
established that if~$R$ and~$S$ are consistent and~$f(u,v)$ is a
saturated flow of the network~$N(R,S)$, then the bag~$T(XY)$ defined
by setting~$T(t) := f(t[X],t[Y])$ for each middle edge~$(t[X],t[Y])$
is a witness to the consistency of~$R$ and~$S$. In particular, the
support~$T'$ of the witness~$T$ is the set of middle edges of~$N(R,S)$
that are used by the flow~$f$.

In order to find a minimal witness to the consistency of~$R$ and~$S$,
we proceed by self-reducibility, deleting middle edges from~$N(R,S)$
one by one. We loop through the middle edges~$(u,v)$ of the current
network and, for each one, ask: is the middle edge~$(u,v)$ used by all
saturated flows of the current network?  If the answer is no, then it
is safe to delete the edge and continue with the new network. If the
answer is yes, then we keep the edge and proceed to the next middle
edge. To tell whether a middle edge~$(u,v)$ is used by all saturated
flows of the current network, we can temporarily remove it, compute a
maximum flow of the resulting network, and check whether it is
saturated. Since the number of middle edges of the initial
network~$N(R,S)$ is~$|R' \Join S'|$, a saturated flow along a minimal
subset of middle edges will be found after at most~$|R' \Join S'|$
many such tests. This gives a minimal witness for the consistency
of~$R$ and~$S$.

Before we state the strengthening of
Corollary~\ref{cor:two-cons-poly}, we also need to strengthen the
bound on the support-size of minimal witnesses given by
Theorem~\ref{thm:bound} in the case~$m = 2$.  For this special case,
the standard form of Carath\'eodory's Theorem will suffice. The
\emph{conic hull} of a set~$X\subseteq \mathbb{R}^d$, where~$d\geq 1$,
is the set of all vectors in~$\mathbb{R}^d$ that can be written as a
linear combination of vectors from~$X$ with non-negative
coefficients. Carath\'eodory's Theorem asserts that
if~$X\subseteq \mathbb{R}^d$ for some~$d\geq 1$ and if a vector~$x$
belongs to the conic hull of~$X$, then there is a subset~$X_0$ of~$X$
of cardinality at most~$d$ such that~$x$ belongs to the conic hull
of~$X_0$ (in Schrijver's book on linear and integer programming, this
is stated as Corollary~7.1i and it follows from more general results
about linear programming).

\begin{theorem}\label{thm:bound2}
  Let~$R$ and~$S$ be consistent bags and let~$W$ be a bag that
  witnesses their consistency. If~$W$ is a minimal witness to the
  consistency of~$R$ and~$S$,
  then~$\suppnorm{W} \leq \suppnorm{R} + \suppnorm{S}$.
\end{theorem}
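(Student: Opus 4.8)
The plan is to mirror the proof of the third statement of Theorem~\ref{thm:bound}, but to exploit the special feature of the two-bag case: the constraint matrix of the associated linear program is totally unimodular, so the \emph{standard} conic Carath\'eodory theorem suffices in place of its integer analogue (Lemma~\ref{lem:ipcaratheodory}). The dimension bound this yields is exactly $\suppnorm{R}+\suppnorm{S}$, the number of equality constraints of $P(R,S)$.

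First I would set up $P(R,S)$ from~\eqref{eqn:lpfortwo}, written in matrix form $Ax=b$, where the rows of $A$ are indexed by $R'\cup S'$ and the columns by $J=R'\Join S'$. Thus $A$ is a $d\times|J|$ matrix with $d=|R'|+|S'|=\suppnorm{R}+\suppnorm{S}$, and, as recorded in Section~\ref{sec:two-bag-cons}, $A$ is the vertex-edge incidence matrix of a bipartite graph and hence totally unimodular. Writing $c_t$ for the column of $A$ indexed by $t\in J$ (distinct tuples give distinct columns), a minimal witness $W$ yields the integral feasible point $x_t:=W(t)$, whose nonzero coordinates are exactly those in $W'$, so that $b=\sum_{t\in W'}W(t)\,c_t$ lies in the conic hull of $\{c_t:t\in W'\}$.

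Next I would apply Carath\'eodory's theorem for conic hulls to $\{c_t:t\in W'\}\subseteq\mathbb{R}^d$: since $b$ lies in its conic hull, there is a subset $Q\subseteq W'$ with $|Q|\leq d$, with the columns $\{c_t:t\in Q\}$ linearly independent, and with $b$ in the conic hull of $\{c_t:t\in Q\}$. I would then argue that the restricted system $\sum_{t\in Q}c_t x_t=b$, $x\geq 0$, has an \emph{integral} solution. This is where total unimodularity enters: the columns indexed by $Q$ form a full-column-rank submatrix of the totally unimodular matrix $A$, so the unique solution of the consistent system is obtained by inverting a nonsingular square submatrix of determinant $\pm 1$ against the integral vector $b$, and is therefore integral; linear independence forces this solution to coincide with the nonnegative combination produced by Carath\'eodory, so it is also nonnegative. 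Extending by zeros outside $Q$ and invoking the correspondence between integral feasible solutions of $P(R,S)$ and witnesses (Lemma~\ref{lem:two-cons}), this produces a bag $W_0$ with $W_0'\subseteq Q$ that witnesses the consistency of $R$ and $S$.

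Finally, since $W_0'\subseteq Q\subseteq W'$ and $W$ is a minimal witness, $W_0'$ cannot be a proper subset of $W'$, forcing $W_0'=W'=Q$; hence $\suppnorm{W}=|W'|=|Q|\leq d=\suppnorm{R}+\suppnorm{S}$. I expect the only delicate point to be the integrality step: one must be sure that after restricting to the Carath\'eodory subset $Q$ the resulting (necessarily unique) nonnegative solution is integral, which is precisely what total unimodularity of the bipartite incidence matrix guarantees. Everything else is bookkeeping parallel to the proof of Theorem~\ref{thm:bound}.
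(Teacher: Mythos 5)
Your proof is correct, and it follows the same overall skeleton as the paper's: turn the minimal witness $W$ into a feasible point of $P(R,S)$, apply conic Carath\'eodory to the columns of $A$ indexed by $W'$ to get a subset $Q \subseteq W'$ with $|Q| \leq d = \suppnorm{R}+\suppnorm{S}$, convert the resulting nonnegative solution into an \emph{integral} witness supported inside $Q$, and then let minimality force $Q = W'$. Where you diverge is the integrality step. The paper re-reads the nonnegative real solution supported on its Carath\'eodory set $J_0$ as a saturated flow of the subnetwork of $N(R,S)$ in which all middle edges outside $J_0$ are suppressed, and then invokes the integrality theorem for max flow to obtain an integral saturated flow, hence an integral witness with support in $J_0$; this route needs only the cardinality form of Carath\'eodory (no independence), and it doubles as the template for the strongly polynomial algorithm of Corollary~\ref{cor:stronger}, which is why the paper phrases it in flow language. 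You instead invoke the linear-independence form of conic Carath\'eodory (which is indeed what Schrijver's Corollary~7.1i provides, even though the paper's paraphrase only mentions the cardinality bound) and combine it with total unimodularity of the bipartite incidence matrix $A$: full column rank makes the restricted system $A_Q y = b$ have a unique solution, Cramer's rule against a $\pm 1$-determinant square submatrix makes that solution integral, and uniqueness identifies it with the nonnegative Carath\'eodory combination. This is a clean, purely linear-algebraic alternative that reuses the TU fact already recorded in Section~\ref{sec:two-bag-cons}; its only extra demand is the independence clause in Carath\'eodory (without it, you would instead have to pass to a vertex of the restricted polyhedron via Hoffman--Kruskal to recover integrality). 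Both arguments yield exactly the same bound.
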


\begin{proof}
  Let~$J := R' \Join S'$, so~$W' \subseteq J$ by
  Lemma~\ref{lem:inclusion}. By setting~$x_t := W(t)$ for
  each~$t \in J$,  we get a feasible solution for the linear
  program~$P(R,S)$. If we write the constraint matrix of~$P(R,S)$ in
  matrix form as~$Ax = b$, this means that the vector~$b$ is in the
  conic hull of the set of columns of~$A$ indexed by tuples~$t$
  in~$W'$. By Carath\'eodory's Theorem,~$b$ is also in the conic hull
  of a subset of at most~$d$ many of the columns of~$A$ indexed by
  tuples~$t$ in~$W'$, where~$d := \suppnorm{R} + \suppnorm{S}$ is the
  dimension of the vector~$b$. This means that there
  exists~$J_0 \subseteq W'$ with~$|J_0| \leq d$ and a non-negative
  vector~$y = (y_t : t \in J)$ with~$y_t = 0$ for
  each~$t \in J \setminus J_0$ such that~$Ay = b$.
  Setting~$f(t[X],t[Y]) = y_t$ for each~$t \in J_0$, we get a saturated
  flow of the subnetwork~$N_0$ of~$N(R,S)$ in which all middle edges
  of the form~$(t[X],t[Y])$ with~$t \in J \setminus J_0$ have been
  supressed.  Since all capacities of~$N(R,S)$ are integers, as in the
  proof of Lemma~\ref{lem:two-cons}, the integrality theorem for the
  max-flow problem gives a max flow~$f_0(u,v)$ of~$N_0$ with
  integers. This flow of~$N_0$ is also saturated, which means that by
  setting~$W_0(t) := f_0(t[X],t[Y])$ for each~$t \in J_0$ we get a witness
  of the consistency of~$R$ and~$S$ with support~$W_0'$ included
  in~$J_0 \subseteq W'$. Since~$W$ is minimal we have~$J_0 = W'$, from
  which it follows that~$|W'| = |J_0| \leq d$. That
  is,~$\suppnorm{W} \leq \suppnorm{R}+\suppnorm{S}$.
\end{proof}

\begin{corollary} \label{cor:stronger} There is a strongly
  polynomial-time algorithm that, given two bags~$R$ and~$S$,
  determines whether they are consistent and, if they are, constructs
  a bag~$T$ that is a minimal witness of their consistency. In
  particular,~$\suppnorm{T} \leq \suppnorm{R} + \suppnorm{S}$.
\end{corollary}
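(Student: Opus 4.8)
The plan is to assemble this strengthening of Corollary~\ref{cor:two-cons-poly} from three ingredients already in place: the network characterization of consistency in Lemma~\ref{lem:two-cons}, the middle-edge self-reducibility procedure described just before Theorem~\ref{thm:bound2}, and the support bound of Theorem~\ref{thm:bound2}. First I would decide consistency. By the equivalence of statements (1) and (5) in Lemma~\ref{lem:two-cons}, the bags $R(X)$ and $S(Y)$ are consistent if and only if the network $N(R,S)$ admits a saturated flow, and if a saturated flow exists then every max flow is saturated. Hence I would build $N(R,S)$, compute a single max flow with a strongly polynomial algorithm, and test whether all source-edges and sink-edges are saturated; if not, report inconsistency and halt.

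If $R$ and $S$ are consistent, I would run the middle-edge deletion loop. Maintaining a current subnetwork of $N(R,S)$, I process the middle edges one at a time: to decide whether the current middle edge $(u,v)$ lies on every saturated flow, I temporarily remove it, recompute a max flow, and check saturation; if the reduced network still admits a saturated flow the edge is discarded, otherwise it is kept. After all $|R' \Join S'|$ middle edges have been examined, I take the integral saturated flow $f$ returned on the final pruned network and set $T(t) := f(t[X],t[Y])$, which by the (5) $\Rightarrow$ (1) direction of Lemma~\ref{lem:two-cons} is a witness to the consistency of $R$ and $S$.

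The step requiring the most care is confirming that $T$ is a \emph{minimal} witness in the sense of the definition, i.e.\ that no witness $U$ has support strictly inside $T'$. The key observation is that the loop maintains the invariant that every surviving middle edge is used by every saturated flow of the current network: an edge is kept only when it lies on all saturated flows of the network at that moment, and this property is inherited by every later subnetwork, since the saturated flows of a subnetwork are among the saturated flows of the larger one. Consequently $T'$ equals exactly the set of surviving middle edges. Now any witness $U$ induces, again by Lemma~\ref{lem:two-cons}, a saturated flow of $N(R,S)$ supported on $U'$; if $U' \subseteq T'$ then this flow survives in the final network and must therefore use all surviving middle edges, forcing $U' = T'$. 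Hence $T$ is minimal.

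For the running time, each iteration performs one max-flow computation via a strongly polynomial algorithm such as Orlin's~\cite{DBLP:conf/stoc/Orlin13}, running in time $O(|V||E|)$ on $N(R,S)$, and there are at most $|R' \Join S'| + 1$ such computations in total. Since all capacities and flow values stay bounded by the input multiplicities and the number of arithmetic operations is polynomial in $|R'|$ and $|S'|$ independently of the bit-lengths of those multiplicities, the whole procedure is strongly polynomial. Finally, because the output $T$ is a minimal witness, Theorem~\ref{thm:bound2} applies directly and yields $\suppnorm{T} \leq \suppnorm{R} + \suppnorm{S}$, which completes the argument.
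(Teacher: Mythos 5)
Your proposal is correct and follows essentially the same route as the paper: the paper also decides consistency via a saturated max flow in $N(R,S)$, prunes middle edges one by one by testing whether their removal destroys all saturated flows, and then invokes Theorem~\ref{thm:bound2} for the support bound. Your explicit invariant argument (that every surviving middle edge is used by every saturated flow of every later subnetwork, so the final support cannot strictly contain any witness's support) is a sound filling-in of a minimality claim the paper leaves implicit.
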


We now put everything together to show that, over acyclic schemas, a
witness to global consistency can be found in polynomial time.

\begin{theorem}
  There is a polynomial time algorithm that, given an acyclic
  hypergraph~$H$ and a collection of bags over~$H$, determines whether
  the collection is globally consistent and, if it is, constructs a
  bag that is a witness to the global consistency of the
  collection. Furthermore, the bag that the algorithm returns has its
  support-size bounded by the sum of the support-sizes of the input
  bags.
\end{theorem}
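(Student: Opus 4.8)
The plan is to turn the inductive argument of Step~1 in the proof of Theorem~\ref{thm:BFMY-bags} into an algorithm, replacing each abstract appeal to the existence of a two-bag witness with the constructive, strongly polynomial procedure of Corollary~\ref{cor:stronger}. First I would decide consistency: since~$H$ is acyclic, Theorem~\ref{thm:BFMY-bags} tells us that the collection~$R_1,\ldots,R_m$ is globally consistent if and only if it is pairwise consistent, and by the equivalence of (1) and (2) in Lemma~\ref{lem:two-cons} pairwise consistency can be tested in polynomial time by checking~$R_i[X_i \cap X_j] = R_j[X_i \cap X_j]$ for all pairs~$i,j$. If some pair fails, the algorithm reports that the collection is not globally consistent (this is correct because global consistency always implies pairwise consistency); otherwise it proceeds to build a witness.

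To build the witness, I would first compute, in polynomial time, a listing~$X_1,\ldots,X_m$ of the hyperedges witnessing the running intersection property; this is available by the equivalence of acyclicity and the running intersection property in Theorem~\ref{thm:BFMY-bags}, together with the standard fact that such an ordering (equivalently, a join tree) can be found in polynomial time. I would then mimic the induction of Step~1: set~$T_1 := R_1$, and for~$i = 2,\ldots,m$, having a bag~$T_{i-1}$ over~$X := X_1 \cup \cdots \cup X_{i-1}$ that witnesses the global consistency of~$R_1,\ldots,R_{i-1}$, use Corollary~\ref{cor:stronger} to construct a \emph{minimal} witness~$T_i$ to the consistency of the two bags~$T_{i-1}$ and~$R_i$. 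The very computation of Step~1 shows that~$T_{i-1}$ and~$R_i$ are indeed consistent (it used only pairwise consistency of the~$R_k$ and the running intersection property, both of which hold here) and that \emph{any} witness~$T_i$ of their consistency witnesses the global consistency of~$R_1,\ldots,R_i$; in particular the minimal witness returned by Corollary~\ref{cor:stronger} works. The output is~$T_m$.

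The point of insisting on minimal witnesses is the size bound, which is where the real care is needed. By Corollary~\ref{cor:stronger}, at each step~$\suppnorm{T_i} \leq \suppnorm{T_{i-1}} + \suppnorm{R_i}$, and since~$\suppnorm{T_1} = \suppnorm{R_1}$, an immediate induction gives the desired~$\suppnorm{T_m} \leq \sum_{i=1}^m \suppnorm{R_i}$. For the running time I would also control multiplicities: applying statement~(1) of Theorem~\ref{thm:bound} to the pair~$(T_{i-1},R_i)$ with witness~$T_i$ yields~$\munorm{T_i} \leq \max\{\munorm{T_{i-1}},\munorm{R_i}\}$, so by induction no multiplicity ever exceeds~$\max_k \munorm{R_k}$. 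Hence every intermediate bag~$T_i$ has support-size at most~$\sum_k \suppnorm{R_k}$ and bit-length per multiplicity at most~$\max_k \mbnorm{R_k}$, so each of the~$m$ calls to Corollary~\ref{cor:stronger} runs on inputs of size polynomial in the original input, and the whole algorithm runs in polynomial time. The main obstacle is precisely this size control: a naive implementation that combined~$T_{i-1}$ with~$R_i$ using an arbitrary witness (Corollary~\ref{cor:two-cons-poly}) could let the support grow multiplicatively at each step and blow up exponentially in~$m$; using minimal witnesses turns the recurrence additive and makes both the support-size bound and the polynomial running time fall out simultaneously.
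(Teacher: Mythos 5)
Your proposal is correct and follows essentially the same route as the paper's own proof: test pairwise consistency, compute a running-intersection ordering from a join tree, iterate Corollary~\ref{cor:stronger} to build minimal two-bag witnesses~$T_i$ (justified by Step~1 of the proof of Theorem~\ref{thm:BFMY-bags}), and control the support via the additive bound and the multiplicities via statement~(1) of Theorem~\ref{thm:bound}. Your closing remark about why minimal witnesses are needed to keep the recurrence additive matches the paper's motivation exactly.
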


\begin{proof}
  Let~$H = (V,\{X_1,\ldots,X_m\})$ be an acyclic hypergraph and
  let~$R_1(X_1),\ldots,R_m(X_m)$ be a collection of bags over~$H$.
  First, we test for pairwise consistency. If there are two bags in
  the collection that are not consistent, then the collection cannot
  be globally consistent, and we stop. Otherwise, we proceed as in the
  proof of Theorem~\ref{thm:BFMY-bags} to construct a witness of their
  global consistency as follows.

  By first computing a rooted join-tree in polynomial time (see
  \cite{10.1137/0213035}) and then by sorting its vertices in
  topological order, we may assume that the listing~$X_1,\ldots,X_m$
  satisfies the running intersection property: for every~$i \in [m]$
  with~$i \geq 2$, there is a~$j \in [i-1]$ such
  that~$X_i \cap (X_1 \cup \cdots \cup X_{i-1}) \subseteq X_j$. By
  induction on~$i = 1,\ldots,m$, we construct a bag~$T_i$
  over~$X_1 \cup \cdots \cup X_i$ that is a witness to the global
  consistency of the bags~$R_1,\ldots,R_i$ and
  satisfies~$\suppnorm{T_i} \leq \sum_{j=1}^i \suppnorm{R_i}$.
  For~$i = 1$, we take~$T_i = R_i$.  For~$i \geq 2$, we apply the
  algorithm given by Corollary~\ref{cor:stronger} on the
  bags~$T_{i-1}$ and~$R_i$ to obtain~$T_i$. In Step~1 of the proof of
  Theorem~\ref{thm:BFMY-bags}, we showed that any bag that witnesses
  the consistency of~$T_{i-1}$ and~$R_i$, such as~$T_i$, also
  witnesses the global consistency of~$R_1,\ldots,R_i$. By
  Corollary~\ref{cor:stronger}, we also
  have that~$\suppnorm{T_i} \leq \suppnorm{T_{i-1}} + \suppnorm{R_i}$, from
  which the desired bound~$\suppnorm{T_i} \leq \sum_{j=1}^i \suppnorm{R_j}$
  follows by the induction hypothesis.

  Let~$M$ be the maximum multiplicity in the input
  bags~$R_1,\ldots,R_m$ and let~$B = \log(M+1)$ be the number of bits
  it takes to represent them. By Theorem~\ref{thm:bound} we have that
  all the multiplicities of every~$T_i$ are bounded by~$M$.
  Therefore, the size of each $T_i$ is bounded
  by~$B\suppnorm{T_i} \leq B\sum_{j=1}^i \suppnorm{R_j}$. The runtime
  of the algorithm is then bounded by~$m$ times the runtime of the
  algorithm in Corollary~\ref{cor:stronger} on inputs of these sizes,
  and is thus bounded by a polynomial
  in~$B\sum_{j=1}^m \suppnorm{R_j}$, i.e., the size of the input.
\end{proof}

\section{Concluding Remarks} \label{sec:conclusions}

In this paper, we investigated the interplay between local consistency
and global consistency for bags. At the structural level, we showed
that bags behave like relations as regards the \ltgc, namely, the
\ltgc~for bags holds over a schema if and only if the sets of
attributes of that schema form an acyclic hypergraph. At the
algorithmic level, however, bags behave different than relations as
regards testing for global consistency. Specifically, for every fixed
schema, testing relations for global consistency is solvable in
polynomial time, while for bags this happens precisely when the schema
is acyclic - otherwise, testing bags for global consistency is
NP-complete.

We conclude by describing certain open problems that are motivated by
the work reported here.

Beeri et al.\ \cite{BeeriFaginMaierYannakakis1983} showed that
hypergraph acyclicity is also equivalent to certain semantic
conditions other than the \ltgc~for relations, including the existence
of a \emph{full reducer} and the existence of a \emph{monotone
  sequential join expression}. Do analogous results hold bags? One of
the difficulties in answering this question is that the bag-join of
two consistent relations need not witness their consistency, thus it
is not at all clear how to define a suitable semi-join operation for
bags or how to find a suitable substitute for a monotone sequential
join expression.

As mentioned in the Introduction, we have recently studied a relaxed
notion of consistency for~$K$-relations, where~$K$ is a positive
semiring \cite{DBLP:journals/corr/abs-2009-09488}. The goal of that
investigation was to find a common generalization of the results by
Vorob'ev \cite{vorob1962consistent} and by Beeri et al.\
\cite{BeeriFaginMaierYannakakis1983}. The stricter notion of
consistency for bags studied here makes perfectly good sense
for~$K$-relations as well. It is an open problem whether or not the
results presented here extend to~$K$-relations under the stricter
notion of consistency, where~$K$ is a positive semiring or some other
type of semiring for which there is a good theory for solving systems
of linear equations or other combinatorial problems formulated over
that semiring.

\bibliographystyle{alpha}
\bibliography{biblio}

\end{document}